\newcommand{\rd}[1]{{\color{red}#1}}
\providecommand{\abs}[1]{|#1|}
\newtheorem{prop}{PROPOSITION}
\newtheorem{obs}{OBSERVATION}
\newtheorem{conj}{CONJECTURE}
\begin{document}
\title{Trade--off relations for operation entropy of complementary quantum channels}

\author{Jakub Czartowski}
\email{jakub.czartowski@student.uj.edu.pl} 
\affiliation{Smoluchowski Institute of Physics, 
Jagiellonian University, ul. {\L}ojasiewicza 11,  30-348 Krak{\'o}w, Poland}

\author{Daniel Braun}
\affiliation{University of Tuebingen, Germany}

\author{Karol {\.Z}yczkowski}
\affiliation{Smoluchowski Institute of Physics, 
Jagiellonian University, ul. {\L}ojasiewicza 11,  30-348 Krak{\'o}w, Poland}
\affiliation{Center for Theoretical Physics, Polish Academy of Sciences, 
Al. Lotnik{\'o}w 32/46, 02-668 Warsaw, Poland}

\date{August 9, 2019}

\begin{abstract}

  The entropy of a quantum operation, defined as the von Neumann entropy of the corresponding
Choi-Jamio{\l}kowski state, characterizes the coupling of the principal system with
the environment. For any quantum channel $\Phi$ acting on
a state of size $N$ one defines the complementary channel
$\tilde \Phi$, which sends the input state into the state of the environment after the operation.
Making use of subadditivity of entropy we show that for any dimension $N$
the sum of  both entropies,  $S(\Phi)+ S(\tilde \Phi)$, is bounded from below.
This result characterizes the trade-off between the information on the initial quantum state
accessible to the principal system and the information leaking to the environment. 
For  one qubit maps, $N=2$, we describe the interpolating family of depolarising maps, 
for which the sum of both entropies gives the lower boundary of the region allowed
in the space spanned by both entropies.
\end{abstract}

\pacs{03.65.Ta, 03.67.-a, 03.67.Ud}
\keywords{quantum channels, entropy of an operation
}

\maketitle

\section{Introduction}

Any
time evolution
of a density matrix $\rho$ can be described by a quantum operation 
$\Phi$, often called a {\sl quantum channel}. It is defined by a completely positive,
 trace preserving linear map, which sends the set of all quantum
 states into itself \cite{Pe95}. 
 Such a channel can be considered as a generalization of the unitary
 evolution of a density matrix
 that takes into account the interaction of the system with an environment or with the measurement apparatus. 
 The action of such a map, following Stinespring's dilation theorem, can also be interpreted as a unitary evolution of the joint system composed of the principal system and the environment, followed by the partial trace over the environment. 
 
For any quantum operation $\Phi$ one defines the complementary
operation $\tilde \Phi$, which maps the initial state $\rho$ into the final state of the 
environment \cite{H05}. In the language of quantum communication, 
the state $\rho'=\Phi(\rho)$ describes 
the final state at the output of the channel, 
while the state $\rho''={\tilde\Phi}(\rho)$
describes the final state of the eavesdropper, 
who attempts to intercept the information transmitted
in the state $\rho$.

To quantify the amount of information encoded in a classical or
quantum state, various entropic measures (e.g. those based on von Neumann
entropy) are often invoked \cite{OP}.
A similar entropic approach can also be used to describe
the information flow induced by a quantum channel.
In particular, the notions of {\sl Holevo quantity} \cite{Ho73},
coherent information and information exchange \cite{SN96,HG12}
defined for a map  $\Phi$ and an initial state $\rho$
are based on the von Neumann entropy.

To analyze the set of quantum operations it is convenient 
to make use of the known Choi-Jamio{\l}kowski isomorphism \cite{Ja72,Ch75},
which relates a quantum operation $\Phi$ acting on an $N$ dimensional state $\rho$
to an auxiliary state $\sigma_{\Phi}$ defined on an extended space of size $N^2$.
If this state is pure the corresponding map is unitary, $\Phi_U(\rho)=U\rho U^{\dagger}$,
while the maximally mixed state
for  $\sigma_\Phi=\rho^*$
corresponds to the totally depolarizing channel.
Thus the degree of mixing of the Choi-Jamio{\l}kowski state $\sigma_{\Phi}$,
characterized by its von Neumann entropy, can be used to describe the   
degree of nonunitarity of the map $\Phi$ and the coupling with an environment. 

More formally, for any channel $\Phi$ one defines its entropy \cite{ZB04,BZ06}
as the von Neumann entropy of the corresponding Choi-Jamio{\l}kowski state,
 $S(\Phi):=S(\sigma_{\Phi})$. This quantity, also called the {\sl entropy of an operation}
 or {\sl map entropy} yields an upper bound  \cite{RFZ10}
  for the Holevo quantity $\chi$  \cite{Ho73},  associated with the
  transformation of the maximally mixed state $\rho_*$  by the operation $\Phi$.
  
The
  entropy of an operation is additive with respect to the tensor product \cite{RFZ11},
$S(\Phi \otimes \Psi)=S(\Phi)+S(\Psi)$.
It is also known that for bistochastic channels, which preserve the maximally
mixed state, the map entropy is subadditive with respect to concatenation \cite{RFZ08}.
Furthermore, the entropy of an operation
satisfies a trade--off relation  \cite{RPRZ13}
with respect to the receiver entropy, which depends on singular values of the 
superoperator $\Phi$ and describes the
receiver's
knowledge of the output state without any
information on the input.

The aim of this work is to extend these results to establish
a trade o ff relation concerning the operation entropies
of a given quantum channel and its complementary.  The obtained lower bound for the sum of both entropies,
$S(\Phi)+S({\tilde \Phi})$,
is  valid for an arbitrary system size $N$.
We also analyze distinguished  channels, 
for which the above sum 
attains mimimal values and yields
the lower boundary of the allowed
set in the plane $\bigl(S(\Phi), S({\tilde \Phi}) \bigr)$.
In the case of one-qubit maps we
identify the corresponding family of depolarising channels and provide proof of extremity. Furthermore, we present families of channels in product dimensions saturating the obtained general bound and give conjecture concerning method of obtaining precise boundary of the allowed set of entropies in general dimension $N$.

\section{Setting the scene: quantum channels and their entropies}

A quantum channel $\Phi$ denotes a trace preserving and completely positive linear map
which maps a quantum state $\rho\in \mathcal{M}_N$ to another state of a possibly different dimension $M$, namely 
$\rho'=\Phi\qty(\rho)\in\mathcal{M}_M$. 
Any such
channel, also called a quantum operation,
can be represented by a
set
of $m$ Kraus operators $K_i$
\begin{equation}
    \Phi\qty(\rho) = \sum_{i=1}^m K_i \rho \qty(K_i)^\dag ,  
        \label{Krauschanneldef}
\end{equation}
and this representation is not unique.
In general the number $m$ is arbitrary, but for any channel $\Phi$ one can find the
canonical representation for which $m\leq N^2$ - see e.g. \cite{BZ06}.
Any map of the above form is completely positive, but to satisfy the trace preserving condition,
${\rm Tr}\Phi(\rho)= {\rm Tr}\rho$,
the Kraus operators have to fulfil the
identity resolution condition
$ \sum_i \qty(K_i)^\dag K_i = \mathbb{I}_N. $

Let us introduce the maximally entangled state on the extended system of dimensionality $N\times N$,
$\ket{\psi_+} = \frac{1}{\sqrt{N}}\sum_{i=1}^N\ket{i}\otimes\ket{i}$. For any map $\Phi$ taking states from $\mathcal{M}_N$ to states of the same dimensionality it allows us to define the corresponding Choi-Jamio{\l}kowski state  \cite{Ja72},
obtained by action of an extended channel, ${\mathbbm I} \otimes \Phi$,
on the maximally entangled state, which can be written as a block matrix 
of linear dimension $N^2$, consisting of $N$ columns of $N$ $N\times
N$ blocks,   
\begin{equation}
    \sigma_{\Phi}
    : = \qty({\mathbbm I}\otimes\Phi)\qty(\op{\psi_+}) = \mqty(\Phi\qty(\op{1}) & \hdots & \Phi\qty(\op{1}{N}) \\ \vdots & \ddots & \vdots \\ \Phi\qty(\op{N}{1}) & \hdots & \Phi\qty(\op{N})). 
    \label{choijamstate}
\end{equation}

Making use of this isomorphism one defines \cite{ZB04} the entropy of the channel 
as the von Neumann entropy $S(\rho) = -\tr \rho\log\rho$ of the corresponding Choi-Jamio{\l}kowski state,
\begin{equation}
    S^{\text{map}}\qty(\Phi) = S\qty(\sigma_\Phi). 
       \label{mapentropydef}
\end{equation}
Since the Choi-Jamio{\l}kowski 
state has dimension $N^2$,
the entropy of a channel is bounded from above,
$
    S^{\text{map}}\qty(\Phi) \leq 2 \log N.
    $ 
The entropy of a unitary channel is equal to zero, 
while the upper bound is saturated for the maximally depolarizing channel  \cite{BZ06}.

If two quantum channels are close, so that the trace distance between the corresponding 
Choi-Jamio{\l}kowski states is small, then due to the Fannes theorem \cite{Fa73}
 the entropies of both channels are similar.
 The entropy of a channel is easier to determine than other entropic quantities,
 like the minimal output entropy, as no minimization is involved.

Several  interesting properties of the entropy of a channel were obtained during the recent decade
\cite{RFZ08,RFZ10,RFZ11,RPRZ13}.
However, to avoid misunderstanding it is worth
mentioning here that recently the
notion of {\sl entropy of a channel} was used in
a similar spirit for a related
 but different quantity  \cite{gour}, calculation of which requires optimization.


Any quantum operation  $\Phi$ can
also be represented in an environmental form using Stinespring dilation theorem, 
so that the initial state $\rho$ is coupled to an environment of
dimension $M$ by a unitary operator $U$
acting on the combined Hilbert space of 
dimension $NM$,
\begin{equation}
    \Phi\qty(\rho) = \Tr_{\mathcal{E}} \qty[U\qty(\rho\otimes\op{1})U^\dag]. \label{unitarydynamicsdef}
\end{equation}
It can be assumed that the  $M$--dimensional environment ${\mathcal{E}}$
is intially prepared in an arbitrary pure state  $\omega = \op{1}$.
The above expression is equivalent to the Kraus form  (\ref{Krauschanneldef}),
as the Kraus operators $K^i$ are determined by the  block-column of  the matrix $U$,
namely $ K^i_{jk} = U_{j+(i-1)N\,k}$,
 and  its  unitarity imposes the trace preserving condition.
The  complementary operation $\tilde \Phi$ is defined \cite{H05}
by an analogous formula with partial trace over the principal system $S$,
%
\begin{equation}
    \tilde\Phi\qty(\rho) = \Tr_S \qty[U\qty(\rho\otimes\op{1})U^\dag], 
    \label{complementarydef1}
\end{equation}
so it concerns the state of the environment after the operation. 

The complementary channel $\tilde \Phi$ 
can also be written  with use of the orthogonal SWAP operation, defined as $O_{SWAP}(\rho\otimes\sigma)O_{SWAP} = \sigma\otimes\rho$, which 
exchanges the principal system with the environment:
\begin{equation}
    \tilde\Phi\qty(\rho) = \Tr_{\mathcal{E}} \qty[O_{SWAP}U\qty(\rho\otimes\op{1})U^\dag O_{SWAP}]. \label{complementarydef2}
\end{equation}
These relations imply that the Kraus operators $\tilde K_i$ 
forming the complementary operation $\tilde \Phi$
can be obtained from Kraus operators $K_i$ corresponding to the 
original channel by exchanging the rows of matrices \cite{SRZ16},
\begin{equation}
    \qty(\tilde K_\alpha)_{ij} = \qty(K_i)_{\alpha j} 
    \label{Krausrowswap}
\end{equation}
where $i,j = 1,\hdots,N$, $\alpha = 1,\hdots,M$ and Kraus operators that are not specified are assumed to be equal to zero.
%

\section{Bounding the sum of two entropies}

In order to establish bounds for the entropy of a channel
we start by  pointing out  the relation between $S(\Phi)$ 
and the entropy of the image of the maximally mixed state,
$\rho_*={\mathbbm I}/N$, 
with respect to the complementary channel.
%
    
\begin{prop}
\label{prop1}

Consider a channel $\Phi$ acting on an $N$ dimensional system by
coupling it with the environment of
dimension $M$.
The  entropy of  the channel is equal to the entropy of the image of
the 
maximally mixed state $\rho_* = \mathbb{I}/N$ 
under the complementary channel $\tilde\Phi$.
An analogous relation holds for the complementary channel, 
    \begin{align}
        S^{\text{map}}\qty(\Phi) & = S\qty(\tilde\Phi\qty(\rho_*)) & 
        S^{\text{map}}\qty(\tilde\Phi) & = S\qty(\Phi\qty(\rho_*)) \label{prop1eq1}
    \end{align}
\end{prop}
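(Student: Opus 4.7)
The plan is to exploit the Stinespring dilation to realize both $\sigma_\Phi$ and $\tilde\Phi(\rho_*)$ as reduced states of a single tripartite pure state, and then invoke the standard purification identity $S(\rho_X)=S(\rho_{X^c})$ that holds for any bipartition of a pure state. All the ingredients are already set up in the excerpt; no optimization or inequality is needed.

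First I would introduce three systems: a reference $A$ and a principal system $B$, each of dimension $N$, together with an environment $E$ of dimension $M$ prepared in the pure state $|1\rangle$. Define the tripartite pure state
\begin{equation}
    |\Psi\rangle_{ABE} \;=\; (\mathbbm{I}_A\otimes U_{BE})\bigl(|\psi_+\rangle_{AB}\otimes|1\rangle_E\bigr),
\end{equation}
where $U$ is the Stinespring unitary of Eq.~(\ref{unitarydynamicsdef}). Tracing out $E$ gives, by construction and by Eq.~(\ref{choijamstate}), the Choi--Jamio{\l}kowski state $\sigma_\Phi$ on $AB$, so that $S^{\text{map}}(\Phi)=S(\sigma_\Phi)=S(\rho_{AB})$. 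Tracing out $B$ analogously yields $(\mathbbm{I}\otimes\tilde\Phi)(|\psi_+\rangle\langle\psi_+|)$ on $AE$, i.e.\ the Choi state of the complementary channel, so $S^{\text{map}}(\tilde\Phi)=S(\rho_{AE})$.

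The key step is a rearrangement of the partial traces used to compute $\rho_E$. Since $\mathrm{Tr}_A |\psi_+\rangle\langle\psi_+|=\rho_*=\mathbbm{I}_N/N$ and $U$ acts only on $BE$, partial traces over $A$ and over $BE$ commute, giving
\begin{equation}
    \rho_E \;=\; \mathrm{Tr}_B\bigl[U(\rho_*\otimes|1\rangle\langle 1|)U^\dagger\bigr] \;=\; \tilde\Phi(\rho_*),
\end{equation}
by the definition (\ref{complementarydef1}) of $\tilde\Phi$. Because $|\Psi\rangle_{ABE}$ is pure, the reduced states on complementary subsystems have equal von Neumann entropy, hence $S(\rho_{AB})=S(\rho_E)$, which yields $S^{\text{map}}(\Phi)=S(\tilde\Phi(\rho_*))$. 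The second identity in (\ref{prop1eq1}) follows from the symmetric computation: tracing out $A$ and then $E$ from the same $|\Psi\rangle$ gives $\rho_B=\Phi(\rho_*)$, and purity now gives $S(\rho_{AE})=S(\rho_B)$, so $S^{\text{map}}(\tilde\Phi)=S(\Phi(\rho_*))$.

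There is no real obstacle here; the whole argument is the observation that the Choi state is a purification of $\tilde\Phi(\rho_*)$ (and vice versa). The only point that deserves a careful word is the commutativity of the two partial traces used to compute $\rho_E$, which is immediate because $U$ acts trivially on the reference system $A$ and because tracing over disjoint tensor factors always commutes.
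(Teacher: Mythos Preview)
Your proof is correct and follows essentially the same approach as the paper: both construct the tripartite pure state $(\mathbbm{I}_A\otimes U_{BE})(|\psi_+\rangle_{AB}\otimes|1\rangle_E)$ and then invoke the equality of von Neumann entropies of complementary reductions of a pure state to obtain $S(\rho_{AB})=S(\rho_E)$ and $S(\rho_{AE})=S(\rho_B)$. Your write-up is in fact slightly more explicit than the paper's in spelling out why $\rho_E=\tilde\Phi(\rho_*)$ and $\rho_B=\Phi(\rho_*)$, but the underlying argument is identical.
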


\begin{proof}
	The proof of this proposition is given in Appendix \ref{proofsofprops}.
\end{proof}

To characterize the quantum information remaining in the initial
state $\rho$ after an action of a given channel $\Phi$ 
one uses the {\sl coherent information} \cite{SN96} expressed by 
the 
difference between the 
output entropies of a channel and its complementary,
\begin{equation}
    I_{coh}\qty(\Phi,\rho) = S\qty(\Phi(\rho)) - S\qty(\Tilde\Phi(\rho)).
\end{equation}
Select now the initial state to be maximally mixed, $\rho=\rho_*$.
Due to Proposition \ref{prop1} the coherent information of $\Phi$
can be expressed in this case by the inversed difference 
of the entropy of the channel and its complementary, 
\begin{equation}
    I_{coh}\qty(\Phi,\rho_*) = 
      S^{\text{map}}\qty(\Tilde\Phi) - S^{\text{map}}\qty(\Phi).
\end{equation}
The more unitary 
the 
channel $\Phi$, the smaller its entropy 
$S^{\text{map}}\qty(\Phi)$, and the less information leaks
out of the initial state $\rho_*$ to the environment.

\smallskip

Proposition 1 allows us to demonstrate a general bound for 
the sum of entropies of a channel and its complementary.
\begin{obs}
\label{obs1}
    The entropies of a given channel  $\Phi$ and of its complementary $\tilde \Phi$  
    are bounded by the following inequality
    \begin{align}
        S^{\text{map}}\qty(\Phi) + S^{\text{map}}\qty(\tilde\Phi) = S^{\text{map}}\qty(\Phi\otimes\tilde\Phi) & \geq \log N \label{prop2ieq3}
    \end{align}
\end{obs}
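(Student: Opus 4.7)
The plan is to combine the additivity of map entropy stated earlier with Proposition~\ref{prop1} and the standard subadditivity of von Neumann entropy applied to the Stinespring dilation.

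First, the equality $S^{\text{map}}(\Phi) + S^{\text{map}}(\tilde\Phi) = S^{\text{map}}(\Phi \otimes \tilde\Phi)$ is immediate from the additivity of map entropy with respect to tensor products, $S(\Phi\otimes\Psi)=S(\Phi)+S(\Psi)$, recalled in the introduction. So the substantive content is the lower bound $\log N$.

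Next, invoke Proposition~\ref{prop1} to rewrite the two map entropies as output entropies evaluated on the maximally mixed input,
\begin{equation}
  S^{\text{map}}(\Phi) + S^{\text{map}}(\tilde\Phi) \;=\; S\bigl(\tilde\Phi(\rho_*)\bigr) + S\bigl(\Phi(\rho_*)\bigr).
\end{equation}
Now I would move to the Stinespring picture (\ref{unitarydynamicsdef}): let $U$ be the dilating unitary on the joint $NM$-dimensional Hilbert space with a pure environmental ancilla $\omega = \op{1}$, and set $\sigma_{SE} := U(\rho_* \otimes \omega)U^\dagger$. Then $\Phi(\rho_*) = \Tr_{\mathcal{E}}\sigma_{SE}$ and $\tilde\Phi(\rho_*) = \Tr_S\sigma_{SE}$ are precisely the marginals of the pure-ancilla-dilated global state.

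The final step is to note that $\sigma_{SE}$ is unitarily conjugate to a product $\rho_*\otimes\omega$ whose second factor is pure, so
\begin{equation}
  S(\sigma_{SE}) = S(\rho_* \otimes \omega) = S(\rho_*) = \log N.
\end{equation}
Applying the subadditivity inequality $S(\sigma_{SE}) \leq S(\Tr_{\mathcal{E}}\sigma_{SE}) + S(\Tr_S\sigma_{SE})$ to these marginals yields
\begin{equation}
  \log N \;\leq\; S\bigl(\Phi(\rho_*)\bigr) + S\bigl(\tilde\Phi(\rho_*)\bigr) \;=\; S^{\text{map}}(\tilde\Phi) + S^{\text{map}}(\Phi),
\end{equation}
which is exactly the claimed bound. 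There is no real obstacle here: once Proposition~\ref{prop1} is in hand the argument is essentially a one-line application of subadditivity, and the only thing to be a little careful about is the observation that the purity of the environmental ancilla is what forces $S(\sigma_{SE})=\log N$ rather than something larger.
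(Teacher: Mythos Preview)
Your proof is correct and follows essentially the same strategy as the paper's appendix proof: invoke Proposition~\ref{prop1} and then apply an entropy inequality to the Stinespring dilation. The only cosmetic difference is that you apply ordinary subadditivity directly to the bipartite state $\sigma_{SE}=U(\rho_*\otimes\omega)U^\dagger$, whereas the paper works with the tripartite purification $\rho'_{ABC}$ (obtained by feeding the Bell state through $\mathbb{I}_A\otimes U_{BC}$) and applies the Araki--Lieb inequality $|S(\rho'_A)-S(\rho'_B)|\le S(\rho'_{AB})$ together with the purity relation $S(\rho'_B)=S(\rho'_{AC})$; since for a pure tripartite state subadditivity of $BC$ is equivalent to Araki--Lieb on $A,B$, the two arguments are really the same inequality viewed from different bipartitions.
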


\begin{proof}
   Observation \ref{obs1} is easily proven by considering
  \eqref{prop1eq1} and Proposition 8 in \cite{RPRZ13}. An alternative
  proof is given in Appendix \ref{proofsofprops} 
\end{proof}

%
%
%
%
%
%
%
%

\begin{obs}
  The following two inequalities hold for any channel $\Phi$ acting
  on a
  space of dimension $N$ with $M$-dimensional environment: 
    \begin{align}
        S^{\text{map}}\qty(\Phi) &\leq \log M,&{\rm and}&&
        S^{\text{map}}\qty(\tilde\Phi) &\leq \log N \label{obs1ieq1}.    
    \end{align}
\end{obs}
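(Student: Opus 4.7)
The plan is to derive both inequalities as immediate corollaries of Proposition \ref{prop1}, together with the elementary fact that the von Neumann entropy of a density matrix on a $d$-dimensional Hilbert space is bounded above by $\log d$.

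First I would apply Proposition \ref{prop1} in the form $S^{\text{map}}(\Phi) = S(\tilde\Phi(\rho_*))$. Since $\tilde\Phi$ sends states on the $N$-dimensional principal system to states on the $M$-dimensional environment, the density operator $\tilde\Phi(\rho_*)$ lives in $\mathcal{M}_M$. The standard bound on the von Neumann entropy of any state on a $d$-dimensional space then gives $S(\tilde\Phi(\rho_*)) \leq \log M$, which is the first inequality in \eqref{obs1ieq1}.

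Next I would apply the second identity in \eqref{prop1eq1}, namely $S^{\text{map}}(\tilde\Phi) = S(\Phi(\rho_*))$. Here $\Phi$ sends states on the principal system back to the principal system of dimension $N$, so $\Phi(\rho_*) \in \mathcal{M}_N$, and the same generic entropy bound yields $S(\Phi(\rho_*)) \leq \log N$, which is the second inequality.

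There is no real obstacle here beyond being careful about which dimension carries which state: one must note that $\Phi$ is treated as a map on $\mathcal{M}_N$ (principal in and principal out), while $\tilde\Phi$ has input dimension $N$ and output dimension $M$, so that the roles of $N$ and $\log M$ on the right-hand sides of \eqref{obs1ieq1} are correctly matched to the respective output spaces. Equality is saturated in each case precisely when the corresponding output of the maximally mixed state is itself maximally mixed on its output space, which occurs for the completely depolarising channel and its complementary.
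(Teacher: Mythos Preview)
Your proof is correct and is precisely the argument the paper has in mind: the paper's own proof simply states that the result follows from direct inspection of \eqref{prop1eq1}, which is exactly the combination of Proposition~\ref{prop1} with the elementary bound $S(\rho)\leq \log d$ for $\rho\in\mathcal{M}_d$ that you spell out.
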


\begin{proof}
	Proof follows from direct inspection of \eqref{prop1eq1}.
\end{proof}



Let us consider a channel $\Phi_U$ on an $N$-dimensional system with
one Kraus operator $K_1 = U$ 
which is hence a unitary operator.
The complementary channel $\tilde\Phi_U$ is given by set of $N$ Kraus
operators $\tilde K_i$ determined by succesive rows of the Kraus operator
$K_1$,
$
    \Tilde{K}_i = \sum_{j=1}^N U_{ij} \op{1}{j}.
$
For the unitary channel $\Phi_U$, the
Choi-Jamio{\l}kowski state 
has the same entropy as
the state related to the identity channel, thus 
   $S^{map}\qty(\Phi_U) = 0$.
The Choi-Jamio{\l}kowski state of the complementary channel $\Tilde{\Phi}$ is easily
found to be composed of blocks $\sigma_{\mu\nu} = \rho_*\otimes\op{1}$
 of dimension $N$, 
thus $
    S^{map}\qty(\Tilde{\Phi}_U) = \log N.
$
Collecting the two entropies, we can see that for unitary channels of
dimension $N$ the total entropy of channel and its complementary is
given by 
\begin{equation}
    S^{map}\qty(\Phi_U) + S^{map}\qty(\tilde\Phi_U) = \log N \label{eq13}
\end{equation}
and saturates the bound \eqref{prop2ieq3}.

Another way of calculating \eqref{eq13} 
is through use of Eqs. \eqref{prop1eq1}. A unitary channel acting on a maximally mixed state $\rho_* = \mathbb{I}/N$ leaves it unchanged, $
    \Phi_U\qty(\rho_*) = \rho_*$, from which $S\qty(\Phi\qty(\rho_*)) = \log N$.
Furthermore, the complementary channel takes it to a projector state of the environment
$    \tilde\Phi_U\qty(\rho_*) = \op{1}$ so that $S\qty(\tilde\Phi_U\qty(\rho_*)) = 0.
$

In the following sections we aim to provide even more insight about the structure of the set of allowed operations in the plane of entropy of a channel and of its complementary.

\section{Qubit channels}

The easiest system to consider is a qubit system coupled to a qubit environment.  In such a case the channel $\Phi$ and its complementary $\Tilde{\Phi}$ can be both represented by two Kraus operators. In terms of entropies' plane $\qty(S^{map}(\Phi),S^{map}(\Tilde{\Phi})) \equiv \qty(S, \Tilde{S})$, there are three  points in the boundary of available region $\mathcal{A}_2$, for which we identify representantive channels:

\begin{enumerate}
    \item $(0, \log 2)$: Unitary channels $\Phi_U$, for which an exemplary channel is the identity channel $K_1 = \mathbb{I}$;
    \item $(\log 2, 0)$ One-step emission channel $\Phi_{E}$ given by $K_1 = \op{1}$ and $K_2 = \op{1}{2}$, complementary to identity channel, which sends any density matrix into the ground state $\Phi(\rho) = \op{1}$;
    \item $(\log 2, \log 2)$: Coarse graining channel $\Phi_{CG} = \tilde \Phi_{CG}$, sending any quantum state into the diagonal matrix $\Phi(\rho) = {\rm diag}(\rho)$, determined by Kraus operators $K_1 = \op{1}$ and $K_2 = \op{2}$. 
\end{enumerate}

In Figure \ref{fig:qbitfullbound} we show the space available for
qubit-qubit channels together with its boundary. The upper boundary for
complementary channel entropy is given by a line segment $S = \ln 2$
and for any given value of $\Tilde{S}$ there exists a channel, given
by interpolation between coarse graining channel $\Phi_{CG}$ and
unitary channel $\Phi_{U}$. Similarly, the upper boundary for channel
entropy can be found by interpolating between $\Phi_{CG}$ and
spontaneous emission channel $\Phi_{E}$. The third curve comprising
the full boundary is given below. 

\begin{prop} \label{prop_qbit_bound}
	The lower boundary curve of the allowed set  $\mathcal{A}_2$
	of one-qubit channels represented in the entropy plane,
	 minimizing $S^{map}(\tilde\Phi)$ for a fixed channel entropy $S^{map}(\Phi) = const.$ is given parametrically by
	\begin{align} \label{boundary_qubit_eqn}
		p(a) = (-a\log a - (1-a)\log(1-a)\nonumber, \\ -(\frac{1}{2} - a) \log(\frac{1}{2} - a)-(\frac{1}{2} + a) \log(\frac{1}{2} + a))
	\end{align}
	{for $a\in[0,\frac{1}{2}]$.} 
\end{prop}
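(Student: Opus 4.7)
The strategy splits naturally into two parts: exhibiting a one-parameter family of channels that traces out the curve (\ref{boundary_qubit_eqn}), and then showing that no qubit channel lies strictly below it.

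For the achievability part I would use the amplitude damping family with Kraus operators $K_1 = \op{1} + \sqrt{1-2a}\,\op{2}$ and $K_2 = \sqrt{2a}\,\op{1}{2}$, which interpolates between the identity channel ($a=0$) and the spontaneous-emission channel $\Phi_E$ ($a=1/2$) already identified above. Applying Proposition~\ref{prop1}, the map entropies $S^{\text{map}}(\Phi)$ and $S^{\text{map}}(\tilde\Phi)$ reduce to the von Neumann entropies of the qubit images $\tilde\Phi(\rho_*)$ and $\Phi(\rho_*)$, and a short matrix calculation shows that these have spectra $(a,\,1-a)$ and $\bigl(\tfrac{1}{2}+a,\,\tfrac{1}{2}-a\bigr)$ respectively, reproducing exactly (\ref{boundary_qubit_eqn}).

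For the lower bound, Proposition~\ref{prop1} again translates the task into bounding $S(\Phi(\rho_*))$ from below given $S(\tilde\Phi(\rho_*))$. In Stinespring form with a qubit environment, the state $\omega_{SE} := U(\rho_* \otimes \op{1})U^\dag$ has rank $2$ with both non-zero eigenvalues equal to $\tfrac{1}{2}$, hence equals $\tfrac{1}{2}P_W$ for some two-dimensional subspace $W \subset \mathbb{C}^2 \otimes \mathbb{C}^2$. The key geometric lemma is that every such $W$ contains at least one product vector: representing $\ket{v} \in W$ by its $2 \times 2$ coefficient matrix $M_v$, the map $(\alpha,\beta) \mapsto \det M_{\alpha w_1 + \beta w_2}$ is a homogeneous polynomial of degree $2$ in any spanning basis $\{w_1, w_2\}$ of $W$, and so has a non-trivial complex zero yielding a product vector in $W$. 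Local unitaries on $S$ and $E$, which preserve the marginal entropies, then bring this product vector to $\ket{1}\ket{1}$.

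After this reduction, a unit vector in $W$ orthogonal to $\ket{11}$ takes the form $\ket{\chi_2} = b\ket{12} + c\ket{21} + d\ket{22}$ with $|b|^2 + |c|^2 + |d|^2 = 1$, and computing the partial traces of $\omega_{SE}$ directly yields Bloch-vector lengths satisfying $|\vec{r}_S|^2 = |b|^2(1-|c|^2)$ and $|\vec{r}_E|^2 = |c|^2(1-|b|^2)$. Expanding $(|\vec{r}_S| + |\vec{r}_E|)^2$ and applying AM--GM to the cross term $2\sqrt{|b|^2|c|^2(1-|b|^2)(1-|c|^2)} \leq |b|^2|c|^2 + (1-|b|^2)(1-|c|^2)$ gives $|\vec{r}_S| + |\vec{r}_E| \leq 1$, with equality if and only if $d = 0$, i.e., precisely on the amplitude damping subfamily. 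Converting Bloch length back into entropy through $h\!\left(\tfrac{1 - |\vec{r}|}{2}\right)$ recovers the claimed parametric bound.

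The main obstacle is the geometric reduction step: while the generic $W$ yields two distinct product vectors and the subsequent calculation is routine, one must also verify the inequality in the degenerate configurations, namely coincident product vectors or $W$ being a ruling $\ket{\phi} \otimes \mathbb{C}^2$ of the Segre variety. These can be handled as limits of the generic case or by direct inspection; in the ruling case one finds $|\vec{r}_S|=1$, $|\vec{r}_E|=0$ or the reverse, which lie exactly at the endpoints of the curve (\ref{boundary_qubit_eqn}).
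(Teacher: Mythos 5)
Your proposal is correct, but it proves the proposition by a genuinely different route than the paper. The paper's argument is an exhaustive Kraus-operator computation: it brings $K_1$ to diagonal form by singular value decomposition, parametrizes $K_2$ in phased spherical coordinates, splits the unitarity constraint $K_1^\dag K_1+K_2^\dag K_2=\mathbb{I}$ into cases, exploits the qubit-specific equivalence of extremizing von Neumann and linear entropy, and locates the extremal family by setting partial derivatives of the linear entropy of $\tilde\Phi(\rho_*)$ to zero in each case. You instead work geometrically in the Stinespring picture: after Proposition~\ref{prop1} the joint state $U(\rho_*\otimes\op{1})U^\dag$ is one half of a projector onto a two-dimensional subspace $W\subset\mathbb{C}^2\otimes\mathbb{C}^2$, every such $W$ contains a product vector (the degree-two determinant argument), local unitaries bring it to $\ket{1}\ket{1}$ without changing the two marginal entropies, and the remaining orthonormal vector $b\ket{12}+c\ket{21}+d\ket{22}$ yields the Bloch-radius trade-off $\abs{\vec r_S}+\abs{\vec r_E}\le 1$ by a single AM--GM step, with equality exactly when $d=0$, i.e.\ on the amplitude-damping family that you verify attains the curve. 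Your route buys a sharp scalar inequality whose equality case identifies the extremal channels at once, avoids the case-by-case calculus entirely, and handles the "degenerate" subspaces automatically (once a product vector exists, the normal-form computation covers rulings and coincident product vectors, which land at the endpoints), while the paper's parametrization has the merit of being a direct, self-contained sweep of all rank-two qubit channels. Two small points to make explicit when writing this up: both your proof and the paper's are restricted to a qubit environment (Kraus rank two), which is the intended scope of $\mathcal{A}_2$; and the final conversion step should note that $r\mapsto -\frac{1-r}{2}\log\frac{1-r}{2}-\frac{1+r}{2}\log\frac{1+r}{2}$ is strictly decreasing, so fixing $S^{map}(\Phi)$ is equivalent to fixing $\abs{\vec r_E}$ and minimizing $S^{map}(\tilde\Phi)$ is equivalent to maximizing $\abs{\vec r_S}$.
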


\begin{proof} A detailed proof of the extremity of this curve, 
labeled in Fig. \ref{fig:qbitfullbound}
 by the letter `c', 
 is given in Appendix \ref{prop3proof}. 
\end{proof}
 
The simplest way to obtain this parametric formula is to consider spontaneous emission channels interpolating between $\Phi_U$ and $\Phi_{E}$, given in terms of Kraus operators
\begin{align}
K_1 = \mqty(1 & 0 \\ 0 & \sqrt{x}) && K_2 = \mqty(0 & \sqrt{1-x}\\0 & 0)
\end{align}
with $x = 2a - 1$. Entropy of this channel and its complementary follow the extreme curve in \eqref{boundary_qubit_eqn}. It is important to note that these channels can be seen as those for which for constant entropy of the image of the maximally mixed state $\Phi(\rho_*)$ the information escaping to the environment is minimized. 

\begin{figure}[h]
    \centering
    \includegraphics[width=.9\linewidth]{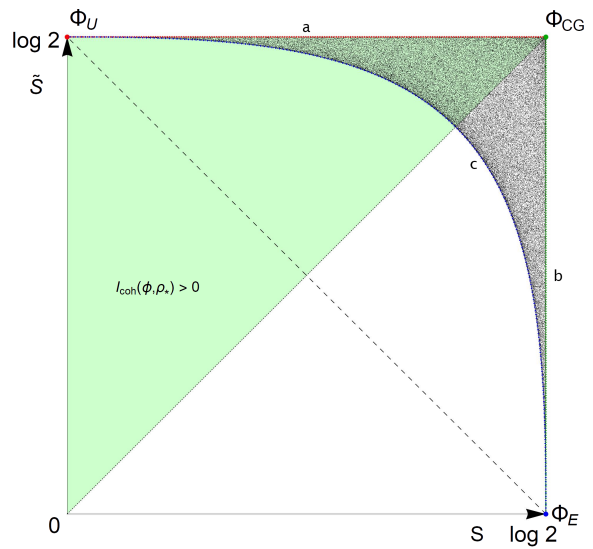}
    \caption{Scatter plot of a sample of randomly generated one qubit-qubit channels of rank two  in the entropy plane $(S=S(\Phi), 
    {\tilde S}=S({\tilde \Phi}))$,
     which belong to the allowed set  $\mathcal{A}_2$.
     Lines $a$ and $b$ saturate upper bounds \eqref{obs1ieq1} on the entropy of channel and its complementary, respectively, and are easily found by interpolating between coarse graining channel $\Phi_{CG}$ and spontaneous emission channel $\Phi_{E}$ or unitary channel $\Phi_U$. Minimal curve $c$, described by parametric formula \eqref{boundary_qubit_eqn}, can be found by interpolation between $\Phi_U$ and $\Phi_{E}$.}
    \label{fig:qbitfullbound}
\end{figure}

\section{General dimension}

\subsection{Emission channels and matrix $L$}

Let us introduce a left upper triangular matrix $L$ of dimension $N$ with entries $L_{ij} \in \qty{0,1}$ such that $\sum_{i=1}^N L_{i,\,N + 1 - i} = 1$. Any such matrix determines a valid quantum channel $\Phi$ with Kraus operators
\begin{equation}
  \label{eq:KfromL}
K_i = \sum_{j=1}^{N+1-i} L_{ij} \op{j}{j + i - 1},  
\end{equation}
with $i=1,\ldots,N$. 

Its complementary channel $\Tilde{\Phi}$ is found by similar formula $\Tilde{K}_i = \sum_{j=1}^{N+1-i} L^T_{ij} \op{j}{j + i - 1}$. We will call such channels emission channels.

The entropy for any emission channel represented by its matrix $L$ is given by
\begin{equation}
    S^{map}(\Phi) = \frac{1}{N}\qty(N \log N - 
    \sum_{i=1}^N d_i \log d_i) \label{Lchanen}
\end{equation}
where $d_i = \sum_j L_{ij}$ is the number of ones in rows. Similarly, the entropy of the complementary channel
$\tilde\Phi$ can be given in terms of the numbers of ones $\tilde d_i =
\sum_j L_{ji}$ in columns of matrix $L$, 
\begin{equation}
    S^{map}(\tilde\Phi) = \frac{1}{N}\qty(N \log N - 
    \sum_{i=1}^{N}\tilde d_i \log \tilde d_i) \label{LTchanen}\,.
\end{equation}
Thus, the entropies may be calculated from the number of ones in
columns and rows of the $L$ matrix straight away. We can see that, in
terms of entropy, such channels may be denoted 
simply by an ordered pair
of unordered sets $(\qty{d_i},\qty{\tilde d_j})$. 

To illustrate the point, we show below these triangular matrices for qubit one-step emission channel $\Phi_{E}$  and identity channel $\Phi_\mathbb{I}$:

\begin{align*}
	L_\mathbb{E} = \mqty(1 & 0\\ 1 &), && L_{I} = \mqty(1 & 1 \\ 0&).
\end{align*}
Points in the entropy plane for these two channels are easily calculated from \eqref{Lchanen} and \eqref{LTchanen} as $(0,\log 2)$ and $(\log 2, 0)$, respectively. Interpolation between them can be denoted by matrix $A\qty(L_\mathbb{I},L_{E};x)$. Precise definition of channel $A$ is given in appendix \ref{LAdetails}

\subsection{Boundary of the allowed set for $N=3$}
In the case of qutrits, the next easiest system to consider, there are four important points in the boundary of $\mathcal{A}_3$, out of which three are similar to qubit channels already mentioned:

\begin{enumerate}
	\item $(0, \log 3)$: Unitary channels $\Phi_U$, exemplified by identity channel $K_1 = \mathbb{I}$, with matrix $L_U$
	$$
	L_U = \mqty(1 & 1 & 1 \\ 0 & 0 \\ 0);
	$$
	\item $(\log 3, 0)$: Emision channel $\Phi_{E}$, with Kraus channels $K_i = \op{1}{i}$ for $i = 1, 2, 3$, defined by a matrix $L_E$
	$$
	L_E = \mqty(1 & 0 & 0 \\ 1 & 0 \\ 1);
	$$
	\item $(\log 3, \log 3)$: Coarse graining channel $\Phi_{CG}$, with Kraus operators $K_i = \op{i}$ for $i = 1, 2, 3$;
	\item $(\frac{1}{3}\log(\frac{27}{4}),\frac{1}{3}\log(\frac{27}{4}))$: Partial spontaneous emission channels, which can be exemplified by a channel $\Phi_4$ given by its matrix $L_4$:

\begin{equation} \label{Lmatrix_channel4}
	L_4 = \mqty(1 & 0 & 1 \\ 1 & 0 \\ 0).
\end{equation}
\end{enumerate}

With these four channels we can characterize entire boundary for
qutrit channels with qutrit environment. Upper limits are analogous to
ones for qubit-qubit channels: $(\log 3, \Tilde{S})$ and $(S, \log 3)$
are trivially found by interpolating between coarse graining channel
$\Phi_{CG}$ and either unitary channel $\Phi_U$ or spontaneous
emission $\Phi_{SE}$, respectively. The lower boundary is conjectured
below: 
\begin{conj} \label{qutrit_bound_conj}
	For qutrit channels with qutrit environment the lower boundary
        of the allowed set $\mathcal{A}_3$ is given by the parametric curve
	\begin{align} \label{qutrit_boundary_eqn}
		p(a) = (-a\log a - (1-a) \log(1-a),\nonumber \\ \frac{\log 3}{3}-(a+\frac{1}{3})\log(a+\frac{1}{3}) - (a-\frac{1}{3})\log(a-\frac{1}{3}))	
	\end{align}
	with $a\in(0,\frac{1}{3})$, together with reflection through line $S = \Tilde{S}$, which contains all selfcomplementary channels $\Phi = \tilde\Phi$.
\end{conj}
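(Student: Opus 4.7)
The strategy mirrors the qubit case treated in Proposition~\ref{prop_qbit_bound}. First I would exhibit an explicit one-parameter family of channels saturating the curve $p(a)$ in (\ref{qutrit_boundary_eqn}); second, verify the formula via Proposition~\ref{prop1}; third --- and this is where the argument genuinely becomes a conjecture --- show that no other channel lies below this curve. For the first step I propose the interpolation $A(L_U, L_4; x)$ between the identity matrix $L_U$ and the matrix $L_4$ of (\ref{Lmatrix_channel4}), with $x = 3a$ and $a \in [0, 1/3]$. A concrete realisation uses only two non-zero Kraus operators,
\begin{align*}
K_1 &= |1\rangle\langle 1| + \sqrt{1-x}\,|2\rangle\langle 2| + |3\rangle\langle 3|,\\
K_2 &= \sqrt{x}\,|1\rangle\langle 2|,
\end{align*}
which satisfy $K_1^\dagger K_1 + K_2^\dagger K_2 = \mathbb{I}$. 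At $a = 0$ this reduces to the identity channel and at $a = 1/3$ to the partial emission channel $\Phi_4$.

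For the second step, a direct calculation using the row-swap rule (\ref{Krausrowswap}) yields $\Phi(\rho_*) = \mathrm{diag}\bigl(\tfrac{1}{3}+a,\,\tfrac{1}{3}-a,\,\tfrac{1}{3}\bigr)$ and $\tilde\Phi(\rho_*) = \mathrm{diag}(1-a,\,a,\,0)$. Proposition~\ref{prop1} then identifies the channel entropies with the von Neumann entropies of these diagonal states, and the two components of $p(a)$ are recovered immediately (with the convention $0\log 0 = 0$). The symmetric branch $S \leftrightarrow \tilde S$ is then realised by the family of complementary channels $\tilde\Phi_a$, which connects $\Phi_4$ to the emission channel $\Phi_E$; the two branches meet at the self-reflective corner point $\bigl(\tfrac{1}{3}\log\tfrac{27}{4},\,\tfrac{1}{3}\log\tfrac{27}{4}\bigr)$.

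The third step is the real obstacle. Already the qubit extremity proof in Appendix~\ref{prop3proof} leans on the low-dimensional parametrisation of one-qubit channels; the qutrit case has a considerably larger parameter space. My proposed line of attack combines three ingredients: \emph{(i)} a Lagrange-multiplier stationarity analysis on the manifold of Kraus operators, verifying that the proposed family is a critical point of $S^{\text{map}}(\tilde\Phi)$ at fixed $S^{\text{map}}(\Phi)$; \emph{(ii)} a majorisation argument showing that, among channels with fixed $S^{\text{map}}(\Phi) = c$, the spectrum of the Choi--Jamio{\l}kowski state of $\tilde\Phi$ is majorised by that of the proposed extremal channel, which by Schur concavity of the von Neumann entropy forces the bound; \emph{(iii)} a symmetry reduction exploiting the residual symmetries of the endpoints, restricting the optimisation to channels invariant under conjugation by a suitable subgroup of the unitary group. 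The main difficulty is that the lower boundary is not smooth: the corner at $\bigl(\tfrac{1}{3}\log\tfrac{27}{4},\,\tfrac{1}{3}\log\tfrac{27}{4}\bigr)$ marks a transition between two distinct extremal families, so a global proof must handle this change of regime together with the possibility of yet other partial-emission families $A(L_1, L_2; x)$ governing intermediate segments. It is precisely this non-smooth combinatorial aspect --- rather than a single missing inequality --- that keeps the statement a conjecture.
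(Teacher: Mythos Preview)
Your first two steps coincide with the paper's own treatment. The paper also realises the curve via the interpolation between $L_U$ and $L_4$, writing it as $A(L_{\mathbb I},L_{\Phi_4};1-3a)$; the concrete Kraus pair you display is exactly the one given there, and your verification through Proposition~\ref{prop1} reproduces the paper's computation of $\Phi(\rho_*)$ and $\tilde\Phi(\rho_*)$. (A small bookkeeping slip: under the paper's convention $(A(L_1,L_2;x))_{ij}=\sqrt{x(L_1)_{ij}+(1-x)(L_2)_{ij}}$, your Kraus operators correspond to $A(L_U,L_4;1-x)$ rather than $A(L_U,L_4;x)$; the channel itself is correct.) For the reflected branch the paper uses the direct interpolation $A(L_E,L_{\Phi_4};1-3a)$ instead of passing to the complementary family, but by the row-swap rule \eqref{Krausrowswap} these are the same channels up to relabeling.

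The essential point is your third step: there is nothing to compare here, because the paper does \emph{not} prove extremity. The statement is left as a conjecture, supported only by two kinds of numerical evidence: random sampling of qutrit--qutrit channels with the non-uniform generation scheme of Appendix~\ref{Channel_gen}, and Hamiltonian evolution of boundary channels under random GUE generators as in Appendix~\ref{hamiltonianEvo} (exploiting the equivalence of critical points of linear and von~Neumann entropy). Your proposed analytical attack --- Lagrange stationarity, a majorisation argument, and symmetry reduction --- goes well beyond what the paper attempts, and you are right that the corner at $\bigl(\tfrac{1}{3}\log\tfrac{27}{4},\tfrac{1}{3}\log\tfrac{27}{4}\bigr)$ is the structural obstruction: the qubit proof in Appendix~\ref{prop3proof} works because a single smooth family covers the whole boundary, whereas here two families meet non-smoothly. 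None of your three ingredients, as stated, closes this gap, so the status remains exactly what the paper claims: a conjecture with a verified saturating family and numerical corroboration, but no proof of optimality.
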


\begin{figure}[H]
    \centering
    \includegraphics[width=.9\linewidth]{figures/qutrit_full_bound}
    \caption{Scatter plot of randomly generated one qutrit channels of rank three
    which belong to the allowed set  $\mathcal{A}_3$
      in the entropy plane. Lines $a$ and $b$ saturate upper bounds on entropy of channel and its complementary, respectively, and are easily found by interpolating between coarse graining channel $\Phi_{CG}$ and spontaneous emission channel $\Phi_{SE}$ or unitary channel $\Phi_U$. Minimal curve $c$, described by parametric formula \eqref{qutrit_boundary_eqn}, can be found by interpolation between $\Phi_U$ and $\Phi_4$, defined by matrix $L$ in eqn. \eqref{Lmatrix_channel4}.}
    \label{fig:qtritfullbound}
\end{figure}

The parametric form of the curve $d$ described shown in figure \ref{fig:qbitfullbound} can be obtained by
considering entropy of the channels found from matrix
$A(L_\mathbb{I},L_{\Phi_4};1-3a)$, as defined in Appendix
\ref{LAdetails}, which can be given explicitly in terms of two Kraus operators

\begin{align*}
	K_1 = \mqty(
			1 & 0 & 0 \\
			0 & \sqrt{1 - 3a} & 0 \\
			0 & 0 & 1
		), &&
	K_2 = \mqty(
			0 & \sqrt{3 a} & 0 \\
			0 & 0 & 0 \\
			0 & 0 & 0		
		).
\end{align*}
The curve $c$ can be found in similar manner by considering interpolation by $A(L_E, L_{\Phi_4}; 1 - 3a)$. 

As we were unable to find any channels below the
curve described above by extensive numerical probing of 
qutrit-qutrit
channels by method described in Appendix \ref{Channel_gen}, we believe
the conjecture should hold.
Further checks through evolution via random Hamiltonians have been outlined in Appendix \ref{hamiltonianEvo}, giving further evidence to the conjecture.

\section{Bound saturation in product dimensions}

Let us consider an especially compelling example of a matrix $L$ of size 4: 
\begin{equation*}
	L = \mqty(1 & 0 & 1 & 0 \\ 1 & 0 & 1 \\ 0 & 0 \\ 0)
\end{equation*}
for which $\qty{d_i} = \qty{\tilde d_i} = \qty{2,2}$. Using equations \eqref{Lchanen} and \eqref{LTchanen} we find that the total entropy of
such channel and its complementary yields $S\qty(\Phi) +
S\qty(\tilde\Phi) = \log 4$, which saturates the bound
\eqref{prop2ieq3}. This leads us to a more general
statement: 

\begin{prop}
	Consider a state of dimension $N = N_A N_B$ and a unitary matrix $U$ of the same dimension. We define the channel $\Phi$ by its Kraus operators:
	\begin{equation}
		\qty(K_\alpha)_{ij} = \begin{cases}
		 U_{i + (\alpha -1)N_A,\,j} & {\rm for}\, i = 1,\hdots,N_A \\
		 0 & {\rm for}\,i > N_A
		\end{cases}.
	\end{equation}
	With $\alpha = 1,\hdots,N_B$ and $
        j = 
        1,\hdots, N$ For such channels the bound \eqref{prop2ieq3} is saturated.
\end{prop}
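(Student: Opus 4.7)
The plan is to reduce both $S^{\text{map}}(\Phi)$ and $S^{\text{map}}(\tilde\Phi)$ via Proposition~\ref{prop1} to the von Neumann entropies of the images of $\rho_* = \mathbb{I}/N$ under the complementary pair, and then to show that in both cases these images come out maximally mixed on supports of sizes $N_A$ and $N_B$ respectively, so that the two entropies add up to $\log N_A + \log N_B = \log N$.

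First I would evaluate $\Phi(\rho_*) = \frac{1}{N}\sum_{\alpha=1}^{N_B} K_\alpha K_\alpha^\dagger$. Because each $K_\alpha$ vanishes outside its first $N_A$ rows, the product $K_\alpha K_\alpha^\dagger$ is supported on $\mathrm{span}(|1\rangle,\ldots,|N_A\rangle)$, and its matrix elements in that subspace are inner products of the rows of $U$ with indices $\{a+(\alpha-1)N_A\}_{a=1}^{N_A}$ against themselves. Row-orthonormality of $U$ forces $K_\alpha K_\alpha^\dagger = P_{N_A}$, the projector onto this subspace, independently of $\alpha$. Summing over the $N_B$ values of $\alpha$ and dividing by $N = N_A N_B$ yields $\Phi(\rho_*) = \frac{1}{N_A} P_{N_A}$, whose von Neumann entropy is $\log N_A$, so Proposition~\ref{prop1} gives $S^{\text{map}}(\tilde\Phi) = \log N_A$.

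Second I would apply the row-swap rule~\eqref{Krausrowswap} to write the complementary Kraus operators as $(\tilde K_a)_{\alpha j} = U_{a+(\alpha-1)N_A,\,j}$ for $a\leq N_A$, with $\tilde K_a = 0$ for $a>N_A$. For each $a\leq N_A$, the $N_B$ rows of $\tilde K_a$ are the rows of $U$ with indices $a,\,a+N_A,\,\ldots,\,a+(N_B-1)N_A$, which are distinct, so row-orthonormality of $U$ yields $\tilde K_a \tilde K_a^\dagger = \mathbb{I}_{N_B}$. Summing the $N_A$ nonvanishing contributions gives $\tilde\Phi(\rho_*) = \frac{N_A}{N}\mathbb{I}_{N_B} = \frac{1}{N_B}\mathbb{I}_{N_B}$, and a second application of Proposition~\ref{prop1} yields $S^{\text{map}}(\Phi) = \log N_B$. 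Adding the two contributions saturates the bound in Observation~\ref{obs1}. The only mildly delicate point I foresee is the index bookkeeping—one must check that the rows of $U$ being paired in each step are always distinct so that the Kronecker delta from orthonormality collapses cleanly—but this is automatic from the bijection between $\{1,\ldots,N\}$ and pairs $(a,\alpha)$ with $a\leq N_A$, $\alpha\leq N_B$, so no serious obstacle should arise.
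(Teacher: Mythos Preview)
Your proposal is correct and follows essentially the same route as the paper: both compute $\Phi(\rho_*)$ and $\tilde\Phi(\rho_*)$ directly from the Kraus operators, use row-orthonormality of $U$ to collapse each $K_\alpha K_\alpha^\dagger$ (respectively $\tilde K_a \tilde K_a^\dagger$) to a projector/identity, and then invoke Proposition~\ref{prop1} to read off $S^{\text{map}}(\tilde\Phi)=\log N_A$ and $S^{\text{map}}(\Phi)=\log N_B$. Your write-up simply spells out the orthonormality bookkeeping that the paper leaves implicit.
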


\begin{proof}
	Using equation \eqref{prop1eq1} we consider the action of channel on the maximally mixed state, given by
	\begin{equation}
		\Phi(\rho_*) = \sum_{\alpha=1}^{N_B} K_\alpha \frac{\mathbb{I}_N}{N} K_\alpha^\dag = \frac{N_B}{N} \sum_{j=1}^{N_A}\op{j}
	\end{equation}
	which gives entropy $S(\Phi(\rho_*)) = S^{map}(\tilde{\Phi}) = \log N_A$. Analogously, using relation \eqref{Krausrowswap} for the Kraus operators of the complementary channel, we find that $S^{map}(\Phi) = \log N_B$, which completes the proof.
\end{proof}

Noting that this family of channels is highly general, we formulate following conjecture

\begin{conj}
  The number of distinct 
  channels saturating the bound \eqref{prop2ieq3} in the allowed set $\mathcal{A}_N$ is equal to the number of divisors of $N$, including $1$ and $N$ itself.
\end{conj}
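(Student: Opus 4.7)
\medskip
The plan is to read ``distinct channels saturating the bound'' as distinct entropy pairs $\bigl(S^{\mathrm{map}}(\Phi),S^{\mathrm{map}}(\tilde\Phi)\bigr)$ lying on the line $S+\tilde S=\log N$, and to show that these pairs are in bijection with the positive divisors of $N$. Achievability of $\tau(N)$ distinct points is immediate from the preceding proposition: for each divisor $d\mid N$, setting $N_A=d$ and $N_B=N/d$ produces a saturating channel at the point $\bigl(\log(N/d),\log d\bigr)$, and distinct divisors give distinct entropy pairs.

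For the converse I would pass to the Stinespring purification $|\Psi\rangle=(\idty\otimes U)\bigl(|\psi_+\rangle\otimes|0\rangle_E\bigr)$ on registers $A_1\otimes A_2\otimes E$ with $\dim A_1=\dim A_2=N$. By construction $\sigma_\Phi=\rho_{A_1A_2}$ and $\sigma_{\tilde\Phi}=\rho_{A_1E}$, and purity of $|\Psi\rangle$ gives $S(\Phi)=S(\rho_E)=S(\tilde\Phi(\rho_*))$ and $S(\tilde\Phi)=S(\rho_{A_2})=S(\Phi(\rho_*))$, in agreement with Proposition~\ref{prop1}. Since the reduced state on $A_1$ is maximally mixed, purity also yields $S(\rho_{A_2E})=\log N$. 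Saturation of the bound therefore reads $S(\rho_{A_2})+S(\rho_E)=\log N=S(\rho_{A_2E})$, which is the equality case of subadditivity and forces the bipartite marginal to factorize,
\begin{equation*}
\rho_{A_2E}=\Phi(\rho_*)\otimes\tilde\Phi(\rho_*).
\end{equation*}

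The key step, and the one I expect to require the most care, is converting this product structure into an integer constraint on the ranks. Let $r=\mathrm{rank}\,\Phi(\rho_*)$ and $\tilde r=\mathrm{rank}\,\tilde\Phi(\rho_*)$. The product form gives $\mathrm{rank}\,\rho_{A_2E}=r\tilde r$, while $\rho_{A_2E}$ is the marginal of a pure state whose complement $A_1$ has dimension $N$, so the Schmidt bound yields $r\tilde r\leq N$. Chaining this with
\begin{equation*}
\log N=S(\rho_{A_2E})=S(\Phi(\rho_*))+S(\tilde\Phi(\rho_*))\leq\log r+\log\tilde r\leq\log N
\end{equation*}
forces every inequality to be saturated: $r\tilde r=N$ and each marginal is maximally mixed on its support. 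Invoking Proposition~\ref{prop1} once more identifies the entropy pair as $\bigl(\log(N/r),\log r\bigr)$ with $r$ a positive divisor of $N$. Combined with achievability this pins the saturating set down to exactly $\tau(N)$ points; the subtle part is knotting together the two equality conditions---subadditivity on $A_2E$ and the rank--entropy bound on each factor---cleanly enough that $r$ emerges as a genuine divisor of $N$ rather than an arbitrary integer in $[1,N]$.
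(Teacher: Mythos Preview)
The paper does not prove this statement at all: it is labelled a \emph{conjecture} and is left open, motivated only by the preceding proposition showing that each factorization $N=N_AN_B$ yields a saturating channel. So there is no proof in the paper to compare your argument against.

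That said, your argument is essentially a complete proof of the conjecture under the natural reading of ``distinct channels'' as ``distinct points in the entropy plane''. The logic is sound: the tripartite purification gives $\rho_{A_1}=\rho_*$ with full rank $N$, hence $S(\rho_{A_2E})=\log N$ and, by Schmidt, $\mathrm{rank}\,\rho_{A_2E}=N$ exactly (you state this as an inequality, but equality is automatic here and slightly streamlines the argument). Saturation of the bound is precisely equality in subadditivity on $A_2E$, forcing the product form; the product rank then gives $r\tilde r=N$ with $r,\tilde r$ positive integers, and the entropy chain forces each factor to be maximally mixed on its support, yielding the entropy pair $(\log(N/r),\log r)$ with $r\mid N$. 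Combined with achievability from the preceding proposition, this nails down exactly $\tau(N)$ points.

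In short, your proposal goes beyond the paper: you have supplied a proof where the authors only conjectured. The only caveat is the interpretive step of collapsing ``distinct channels'' to ``distinct entropy pairs''; taken literally the number of saturating channels is of course infinite (already the unitary channels form a continuum at $(0,\log N)$), but your reading is clearly the intended one.
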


%


Furthermore, extensive numerical searches for dimension $N = 4, 5$
suggest the following conjecture on the boundary:
\begin{conj}
	For any dimension $N$ the entire lower boundary of available region
	  $\mathcal{A}_N$	 in
	the entropy plane $(S,\Tilde{S})$  can be found as a family of curves attained by channels generated from matrices $A(L_1,L_2;x)$.\\
\end{conj}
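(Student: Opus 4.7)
The plan is to combine Proposition~\ref{prop1}, which expresses both channel entropies as von~Neumann entropies of one-sided marginals of the Choi-Jamio{\l}kowski state $\sigma_\Phi$, with a variational/combinatorial characterisation of the minimisers on the lower boundary of $\mathcal{A}_N$. The strategy is to establish that these minimisers are necessarily emission channels built from pairs of $L$-matrices as in \eqref{eq:KfromL}, interpolated through the construction $A(L_1,L_2;x)$.

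First I would reformulate the problem as: minimise $S(\tilde\Phi(\rho_*))$ over all channels $\Phi$ subject to $S(\Phi(\rho_*))$ being fixed. Both quantities are Schur-concave in the eigenvalues of the two partial traces of $\sigma_\Phi$, so the extremal Choi states should have spectra majorising any other admissible candidate. A Lagrange-multiplier analysis over $\sigma_\Phi$ subject to complete positivity and trace preservation should reduce the stationarity condition to a discrete combinatorial constraint: the eigenvalues of both marginals collapse onto uniform distributions on subsets of basis states, i.e.\ $\Phi(\rho_*)$ and $\tilde\Phi(\rho_*)$ are proportional to diagonal $0/1$ projectors. This matches exactly the structure of emission channels encoded by the row and column sums $\{d_i\}$ and $\{\tilde d_j\}$ appearing in \eqref{Lchanen}-\eqref{LTchanen}.

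Second, I would show that concatenating the arcs $A(L_1,L_2;x)$ over all admissible pairs $(L_1,L_2)$ yields a continuous curve whose entropy-plane image is concave downward and tangent to the absolute bound \eqref{prop2ieq3} at the saturation points produced in the preceding section. The qubit result of Proposition~\ref{prop_qbit_bound} and the conjectured qutrit envelope \eqref{qutrit_boundary_eqn} would then appear as the $N=2$ and $N=3$ instances of a general enumeration of $L$-matrix pairs whose endpoints $(\{d_i\},\{\tilde d_j\})$ successively label the vertices of the lower boundary.

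The hard step, which is precisely why the statement must remain a conjecture, is establishing that \emph{every} point below the $A(L_1,L_2;x)$ envelope is inaccessible to \emph{any} quantum channel and not just to those of emission type. A direct Lagrangian argument must cope with the many non-global critical points of the constrained variational problem on the full $N^2(N^2-1)$-parameter Choi-state manifold. A promising route is to exploit the Kraus gauge freedom $K_i\mapsto\sum_j V_{ij}K_j$ to implement a monotone sparsification step that decreases $S^{\mathrm{map}}(\tilde\Phi)$ at fixed $S^{\mathrm{map}}(\Phi)$ and terminates on an $L$-matrix emission channel, in the spirit of the Schur-Horn theorem. Making such a deformation global rather than local to the already-identified extremisers is the main technical obstacle.
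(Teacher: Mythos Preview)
The paper offers no proof of this statement: it is explicitly a conjecture, supported only by numerical sampling of random channels (Appendix~\ref{Channel_gen}) and evolution under random Hamiltonians (Appendix~\ref{hamiltonianEvo}) for $N\le 5$. There is therefore no argument to compare yours against, and your text is really a proof \emph{strategy}---which you yourself acknowledge when you identify ``the hard step'' as the reason the statement remains open.

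Your strategy, however, contains a concrete error. You assert that a Lagrangian analysis over Choi states will force ``the eigenvalues of both marginals [to] collapse onto uniform distributions on subsets of basis states, i.e.\ $\Phi(\rho_*)$ and $\tilde\Phi(\rho_*)$ are proportional to diagonal $0/1$ projectors''. This is false already for the qubit boundary \eqref{boundary_qubit_eqn}, whose channels have $\tilde\Phi(\rho_*)$ with spectrum $(a,1-a)$ and $\Phi(\rho_*)$ with spectrum $(\tfrac12-a,\tfrac12+a)$; neither is uniform on a subset for generic $a$. It is also false for the pure $L$-matrix channels themselves: for $L_4$ in \eqref{Lmatrix_channel4} one computes column sums $(\tilde d_1,\tilde d_2,\tilde d_3)=(2,0,1)$, so $\Phi(\rho_*)$ has eigenvalues $(\tfrac23,0,\tfrac13)$. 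The explicit qubit calculation in Appendix~\ref{prop3proof} produces a one-parameter family of stationary points, not a discrete collapse; your step~(1) therefore does not land on the object you need.

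A second problem is your claim that the envelope is ``concave downward''. For $N=3$ the lower boundary consists of the curve \eqref{qutrit_boundary_eqn} together with its reflection through $S=\tilde S$, meeting at the partial-emission channel $\Phi_4$; the paper explicitly refers to ``cusp points'' there (Appendix~\ref{hamiltonianEvo}, Fig.~\ref{fig:qtritboundevo2}). For $N=4$ the boundary in Appendix~\ref{qquartbounds} is pieced together from two distinct $A(L_1,L_2;x)$ arcs with further non-smooth junctions. Any argument that relies on global concavity or smoothness of the envelope will fail for $N\ge 3$. Your diagnosis of the genuine obstacle---ruling out \emph{all} channels below the envelope, not just emission-type ones---is accurate, and it is exactly what the paper leaves unresolved.
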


In Appendix \ref{qquartbounds} we provide the conjectured form of the boundary
curve restricting the allowed region $\mathcal{A}_4$ in the entropy plane
 for maps acting on four level systems 
  with the corresponding quantum channels.


%

\section{Concluding remarks}

We established a general trade-off relation
concerning the operation entropy of a given channel $\Phi$
and its complementary $\tilde\Phi$.
In this way we bounded from below the sum of operation entropies, $S^{map}(\Phi)+ S^{map}(\tilde \Phi)$,
which characterizes the sum of information on the initial state $\rho$
accessible to the receiver of the output state $\rho'$ 
and the evesdropper controlling the environment.

Furthermore, we provided an exact characterisation of the boundary
of the allowed set $\mathcal{A}_2$
describing all single-qubit quantum channels
 in the entropy plane $(S^{map}(\Phi),S^{map}(\tilde \Phi))$.
 Similar results concerning one-qutrit channels are formulated
 as a conjecture 
 that allows one to predict the
  general form of the boundary of the analogous sets
  $\mathcal{A}_N$   in the entropy plane
 for quantum operations acting on a system of arbitrary dimension $N$. 
 In this way we identified a particular
class of quantum channels which minimize
the entropy  ${S}^{map}(\tilde\Phi) $ of the complementary channel 
among all the channels with fixed map entropy  $S^{map}(\Phi)$.
 Our results may find applications in quantum thermodynamics in situations where one wants to minimize entropy production in both a system and its environment. 
\bigskip


\bigskip

\medskip

{\em Acknowledgments:} It is a pleasure to thank  Wojciech Roga and {\L}ukasz Rudnicki for fruitful discussions. 

Financial support by the Polish National Science
Centre (NCN) under the grant number DEC-2015/18/A/ST2/00274 
is acknowledged.

\begin{widetext}

\appendix

%

\section{Proof of Proposition 1 and Observation 1} \label{proofsofprops}

\begin{proof}
    Let us consider an extended system $ABC$ with subsystems $A$ and $B$ of dimensionality $N$ and subsystem $C$ of dimensionality $M$ in a pure state
    \begin{equation}
        \rho_{ABC} = \op{\psi_+}_{AB}\otimes\op{1}_C
    \end{equation}
    where $\ket{\psi_+} = \sum_{i=1}^N 1/\sqrt{N} \ket{i}\otimes\ket{i}$ is the $N$-dimensional Bell state.
    Assume now that the operation $\Phi$
    is induced by a unitary operation $U_{BC}$, which couples the system $B$
    with $M$-dimensional environment $C$ initially in a pure state,
    $   \Phi(\rho) = \Tr_C U_{BC}\qty(\rho_* \otimes \op{1})U_{BC}^\dag$.
  Then  we can introduce a tri-partite unitary operation,
   $W = 1_A \otimes U_{BC}$,
    and act with it on state the $\rho_{ABC}$,  obtaining another pure state,
    \begin{equation}
        \rho'_{ABC} = W\rho_{ABC}W^\dag. 
        \label{evstate}
    \end{equation}
    For any bipartite pure state the entropies of both partial traces are equal,  
    \begin{equation}
        S\qty(\rho'_{AB})  = S\qty(\rho'_C) 
        \ \ {\rm and} \ \
        S\qty(\rho'_{AC})  = S\qty(\rho'_B) \label{proofeq2},
    \end{equation}
    which with use of the definition of the  complementary channel $\Tilde\Phi$ 
     implies the desired relations \eqref{prop1eq1}
\end{proof}

\begin{proof}
    Let us again consider the state $\rho'_{ABC}$ from \eqref{evstate} an its reductions again. To prove inequality \eqref{prop2ieq3} we will consider Araki-Lieb inequality
    
    \begin{equation}
        \abs{S\qty(\rho'_A) - S\qty(\rho'_B)} \leq S\qty(\rho'_{AB}). \label{ArakiLieb}
    \end{equation}
    Subsystem $A$ does not change under the action of the unitary operation $W$, so that we have $S\qty(\rho'_A) = S\qty(\rho_*) = \log N$. Furthermore, the dimensions of systems $A$ and $B$ are the same, so that $S(\rho'_B) \leq S(\rho'_A)$ and $\abs{S\qty(\rho'_A) - S\qty(\rho'_B)} = S\qty(\rho'_A) - S\qty(\rho'_B)$. Using \eqref{proofeq2} we can see that 
    \begin{equation}
        \log N \leq S\qty(\rho'_{AB}) + S\qty(\rho'_{AC}) \label{proofieq3}
    \end{equation}
    which proves inequality \eqref{prop2ieq3}.
    
    
\end{proof}

\section{Channels determined by matrix $L$ and $A$} \label{LAdetails}

For any matrix $L$ the block structure of the corresponding Choi-Jamio{\l}kowski state
$\sigma_\Phi$ is easily given in terms of the entries  
of the 
matrix $L$,
\begin{align}
    \qty(\sigma_\Phi)_{\mu\nu} = \sum_{i=1}^N L_{i\,\mu + 1 - i} L_{i\,\nu + 1 - i} \op{\mu + 1 - i}{\nu + 1 - i}
\end{align}
with $L_{ij}\in\qty{0,1}$, $\mu,\nu = 1,\hdots,N$ and all projectors $\op{i}{j}$ referring to the system on which the channel $\Phi$ acts, without the environment.
We notice that blocks of the 
Choi-Jamio{\l}kowski
state $\sigma_\Phi$ either remain as rank one
projectors under summation or they are zeroed out. In general every
such channel generates a ``double-block'' structure of $\sigma_\Phi$ where the
size of corresponding blocks can be read out from the number of ones
in consecutive columns.  

To illustrate the point, let us take a particular matrix $L$ and
construct the corresponding Choi-Jamio{\l}kowski state. 
\begin{align}
   L =  \mqty(1 & 0 & 0 \\ 1 & 1 \\ 0) && \rightarrow && \sigma_\Phi & = \frac{1}{3}\mqty(
	\op{1} & 0 & 0 \\
	0 & \op{1} & \op{1}{2} \\
	0 & \op{2}{1} & \op{2}   
   )
\end{align}

To further simplify notation, we introduce a notion of interpolation between two channels represented by matrices $L_1$ and $L_2$. The matrix $A$ is introduced in terms of its entries:
$\qty(A(L_1,L_2;x))_{ij} = \sqrt{x \qty(L_1)_{ij} + \qty(1-x)\qty(L_2)_{ij}}$. In order to reconstruct the corresponding Kraus operators, we use the same formula as for matrices $L$ 

\begin{equation*}
	K_i = \sum_{j=1}^{N+1-i} A_{ij} \op{j}{j + i - 1}.
\end{equation*}
It can be easily found that the Kraus operators obtained in this way fulfil the identity resolution if $\sum_{i=1}^N \qty(A_{i,\,N + 1 - i})^2 = 1$. 

As an example let us consider interpolation between channels given in terms of $L$ matrices as

\begin{align*}
	L_1 & = \mqty(1 & 0 & 0 \\ 1 & 0 \\ 1) & L_2 & = \mqty(1 & 0 & 0 \\ 1 & 1 \\ 0)
\end{align*}
which give $A$ matrix and corresponding Choi-Jamio{\l}kowski states:

\begin{align}
   A(L_1,L_2;x) = \mqty(1 & 0 & 0 \\ 1 & \sqrt{1-x} \\ \sqrt{x}) 
                        && \rightarrow && \sigma_\Phi & = 
    \frac{1}{3}\mqty(
		\op{1} & 0 & 0 \\
		0 & \op{1} & \sqrt{1 - x}\op{1}{2} \\
		0 & \sqrt{1 - x}\op{2}{1} & x\op{1} + (1-x)\op{2}   	
    	) \nonumber\\
                        && \rightarrow && \sigma_{\Tilde{\Phi}} & = 
    \frac{1}{3}\mqty(
    	\op{1} & \op{1}{2} & \sqrt{x}\op{1}{3} \\
		\op{2}{1} & \op{2} & \sqrt{x}\op{2}{3} \\
		\sqrt{x}\op{3}{1} & \sqrt{x}\op{3}{2} & x\op{3} + (1-x)\op{2}
		)
\end{align}
Similar notion of upper-triangular matrix $A$ can be extended to arbitrary entries as long, as they fulfil the condition imposed by the identity resolution, $\sum_{i=1}^N \qty(A_{i,\,N + 1 - i})^2 = 1$.

\section{Generation of the channels from the allowed set $\mathcal{A}_N$ in the entropy plane} 
\label{Channel_gen}


Standard generation of Kraus operators for $N$ dimensional system with $M$ dimensional environment involves generating unitary matrix $U$ of dimension $N\cdot M$ with respect to flat measure and assuming that its first block column corresponds to the set of Kraus operators, that is

\begin{equation}
    K^i_{jk} = U_{j+\qty(i-1)N\,k}
\end{equation}

This, however, would yield underrepresentation of channels in certain regimes of entropy. In order to overcome this issue we propose different way of generating unitary matrices, described below in steps.

\begin{enumerate}
    \item From all $J$ sets of positive integers $\qty{n_i}_{i=1}^j$ such that their sum is
      equal to the desired dimension of the unitary matrix, $\sum_i n_i =
      N\cdot M$,
    we take one at random with probability\nobreak~$1/J$.
    \item We generate set of unitary matrices of dimensions defined by the chosen set of integers 
    $\qty{U_i:\,\text{dim}(U_i) = n_i}_{i=1}^j$ 
    and construct a block-diagonal matrix
    
    \begin{equation}
        U = \mqty(\dmat[0]{U_1, U_2,\ddots,U_j})
    \end{equation}
    where zeroes are to be understood as matrices of dimension $n_i \cross n_j$ filled with zeroes.
    \item From all possible permutation operations of size $N\cdot M$
      we take two permutations $P_1$ and $P_2$ and define the unitary matrix
    \begin{equation}
        W = P_1 U P_2\,.
    \end{equation}
\end{enumerate}

The unitary operation $W$ allows us to probe the possible entropies of channels more uniformly than in the case of standard generation of Kraus operators.

\section{Proof of proposition 2} \label{prop3proof}

We start from the representation of the qubit channel with qubit environment by Kraus operators. By utilizing unitary freedom of pre- and postpreparation given by $K_i \rightarrow U K_i V$, we may transform first Kraus operator $K_1$ into a diagonal form with $a, b \in \mathbb{R}$ with use of standard procedure of singular value decomposition, thus getting Kraus operators of the form 

\begin{equation}
    K_1 = \mqty(a & 0 \\ 0 & b), \quad
    K_2 = \mqty(\alpha & \beta \\ \gamma & \delta). \label{kraus1form}
\end{equation}
Additional conditions imposed by decomposition of unity $\sum_i K_i^\dag K_i = \mathbb{I}$ can be rewritten as

\begin{align}
    a^2 + \abs{\alpha}^2 + \abs{\gamma}^2 & = 1,\label{udeco1}\\
    b^2 + \abs{\beta}^2 + \abs{\delta}^2 & = 1,\label{udeco2} \\
    \alpha\beta^* + \gamma\delta^* & = 0. \label{udeco3}
\end{align}
Equations \eqref{udeco1} and \eqref{udeco2} allow us to introduce 
phased spherical coordinates:
\begin{align*}
    a &= \cos\theta_1, & b & = \cos\theta_2, \\
    \alpha & = \sin\theta_1\cos\phi_1e^{i \chi_{11}}, & 
    \beta & = \sin\theta_2\cos\phi_2e^{i \chi_{21}}, \\
    \gamma & = \sin\theta_1\sin\phi_1e^{i \chi_{12}}, & 
    \delta & = \sin\theta_2\sin\phi_2e^{i \chi_{22}}
\end{align*}
and in turn rewrite Eq. \eqref{kraus1form} as

\begin{align}
    K_1 & = \mqty(  \cos\theta_1 & 0\\
                    0 & \cos\theta_2) &
    K_2 & = \mqty(  \sin\theta_1\cos\phi_1e^{i \chi_{11}} &
                    \sin\theta_2\cos\phi_2e^{i \chi_{21}} \\
                    \sin\theta_1\sin\phi_1e^{i \chi_{12}} &
                    \sin\theta_2\sin\phi_2e^{i \chi_{22}}) \label{krauspolarform}
\end{align}
with $\theta_1, \theta_2 \in (0,\pi)$ and $\phi_1,\phi_2,\chi_{11},\chi_{12},\chi_{21},\chi_{22}\in(0,2\pi)$.

From here we may inquire about subfamilies satisfying Eq. \eqref{udeco3}. We will do this in steps, eliminating possible cases one by one.

First we consider family of channels with $\sin \theta_1 = 0$ given by Kraus operators of the form

\begin{align*}
    K_1 & = \mqty(  1 & 0\\
                    0 & \cos\theta_2), &
    K_2 & = \mqty(  0&
                    \sin\theta_2\cos\phi_2e^{i \chi_{21}} \\
                    0 &
                    \sin\theta_2\sin\phi_2e^{i \chi_{22}})
\end{align*}
with complementary channel defined accordingly

\begin{align*}
    \Tilde{K}_1 & = \mqty(  1 & 0\\
                    0 & \sin\theta_2\cos\phi_2e^{i \chi_{21}}), &
    \Tilde{K}_2 & = \mqty(  0&
                    \cos\theta_2 \\
                    0 &
                    \sin\theta_2\sin\phi_2e^{i \chi_{22}}).
\end{align*}

Channels from this family, acting on the maximally mixed state $\rho_*
= \mathbb{I}/2$ give: 

\begin{align*}
    \rho_K & = \frac{1}{2}\mqty(
                                 \cos ^2(\phi_2) \sin ^2(\theta_2)+1 & e^{i (\chi_{21}-\chi_{22})} \cos (\phi_2) \sin (\phi_2) \sin ^2(\theta_2) \\
                                 e^{-i (\chi_{21}-\chi_{22})} \cos (\phi_2) \sin (\phi_2) \sin ^2(\theta_2) &  1 - \cos ^2(\phi_2) \sin ^2(\theta_2)) \\
    \rho_{\Tilde{K}}  & = \frac{1}{2}\mqty(
                     \cos ^2(\theta_2)+1 &  e^{-i \chi_{22}} \cos (\theta_2) \sin (\phi_2) \sin (\theta_2) \\
                     e^{i \chi_{22}} \cos (\theta_2) \sin (\phi_2) \sin (\theta_2) &  1-\cos[2](\theta_2)
                    )
\end{align*}
The eigenvalues for these states read:

\begin{align*}
    \lambda_{K,1} & = \frac{1}{2} + \frac{1}{2}\cos (\phi_2) \sin ^2(\theta_2) \\ 
    \lambda_{K,2} & = \frac{1}{2} - \frac{1}{2}\cos (\phi_2) \sin ^2(\theta_2) \\
    \lambda_{\Tilde{K},1} & = \frac{1}{2} + \frac{1}{8} \sqrt{2 \cos \qty(4 \theta _2) \cos ^2\qty(\phi _2)+8 \cos \qty(2 \theta _2)-\cos \qty(2 \phi _2)+7} \\
    \lambda_{\Tilde{K},2} & = \frac{1}{2} - \frac{1}{8} \sqrt{2 \cos \qty(4 \theta _2) \cos ^2\qty(\phi _2)+8 \cos \qty(2 \theta _2)-\cos \qty(2 \phi _2)+7}
\end{align*}

In order to retrieve dependencies between the channel eigenvalues, we will set $\lambda_{K,1} = a \in [0,\frac{1}{2})$. Solving for $\phi_2$ we obtain two solutions

\begin{equation}
    \phi_2 = \pm\arccos(\frac{1 - 2a}{\sin^2\theta_2}).
\end{equation}
 This gives $\lambda_{K,1}$ in terms of one variable $\theta_2$ and a constant $a$.

In case of qubits, minimization of von Neumann entropy $S(\rho) = -\Tr(\rho\log\rho)$ can be easily shown to be equivalent to minimization of linear entropy $S_{lin} = 1 - \lambda_{\Tilde{K},1}^2 - \lambda_{\Tilde{K},2}^2$. 
Consider the differential of $S(\rho)$ with eigenvalues $(\lambda, 1-\lambda)$. We find that 

\begin{equation} \label{S_diff}
	\dd{S} = -\qty(\dd(\lambda\log\lambda) + \dd\qty[\qty(1-\lambda)\log(1-\lambda)]) = \qty(\log(1-\lambda) - \log\lambda)\dd{\lambda}
\end{equation}
which is zero if $\lambda = \frac{1}{2}$ or $\dd\lambda = 0$. Similarly, in case of linear entropy we find that

\begin{equation} \label{P_diff}
	\dd{S_{lin}} = \qty(2 - 4\lambda)\dd{\lambda}
\end{equation}
which is found to be zero under the same circumstances. Let us now consider the linear entropy of aforementioned $\rho_{\Tilde{K}}$ given as 

\begin{equation}
    S_{lin} = 1 - \frac{1}{8} \qty(\cos ^2(\theta_2) \qty(2 \sin ^2(\theta_2) \cos \qty(2 \arcsin(\frac{2a-1}{\sin^2\theta_2}))+\cos (2 \theta_2)+3)+4) \label{firsteq}
\end{equation}

In order to extremize this we need 

\begin{equation} \label{purity1der}
	\pdv{S_{lin}}{\theta_2} = 0
\end{equation}
which leads to solutions $\theta_2 = \pi/2$, $\theta_2 = \arccos(\pm\sqrt{2a})$ or $\theta_2 = \arccos(\pm\sqrt{2(1-a)})$.

When $\theta_2 = \pi/2$, the answer is independent from $a$ and lies above the expected extremal curve. For the remaining solutions, they are equivalent up to permutation of eigenvalues between Kraus operators for channel or its complementary. For this reason, without loss of generality, we consider only one of them.

For $\theta_2 = \arccos{\sqrt{2a}}$ the eigenvalues read:
$
    \lambda_{K,1}  =a,\,
    \lambda_{K,2}  = 1 - a,\,
    \lambda_{\Tilde{K},1}  = \frac{1}{2} - a,\,
    \lambda_{\Tilde{K},2}  = \frac{1}{2} + a,\,
$
and the solution is well defined for $a\in[0,\frac{1}{2})$. 
In this way we obtain a parametric form of the curve representinng the  
boundary channels in terms of
both entropies $S$ and $\tilde S$,

\begin{align}
  S = S^{map}(\Phi) 
  & = -\qty(a\log(a) + \qty(1-a)\log(1-a)), \nonumber \\
    {\tilde S}=  S^{map}(\Tilde{\Phi}) & = -\qty(\qty(\frac{1}{2} - a)\log(\frac{1}{2} - a) + \qty(\frac{1}{2} + a)\log(\frac{1}{2} + a)).
\end{align}
This is the curve given in Eq. \eqref{boundary_qubit_eqn}. Analogous procedure, consisting in setting one eigenvalue for $\Tilde{\Phi}(\rho_*)$ equal to $a$, extracting $\phi_1(a)$ and calculating 
the extremas 
for linear entropy, leads to solution when $\sin(\theta_2) = 0$. For this reason, we only conclude without presentation that it leads to analogous conclusions. 

The next step is to consider the remaining case of \eqref{udeco3}, leading to nontrivial condition
\begin{equation}
    \cos\phi_1\cos\phi_2e^{i(\chi_{11}-\chi_{12})} + \sin\phi_1\sin\phi_2e^{i(\chi_{21}-\chi_{22})} = 0. \label{n-trivial}
\end{equation}
Its consideration  will be split into further substeps. 

If $\sin\phi_1 = 0$, the condition \eqref{n-trivial} is reduced to
$
    \pm\cos\phi_2e^{i(\chi_{11}-\chi_{12})} = 0
$
which is equivalent to
$
    \cos\phi_2 = 0.
$
Such statement reduces the Kraus operators to the form

\begin{align*}
    K_1 & = \mqty(  \cos\theta_1 & 0\\
                    0 & \cos\theta_2), &
    K_2 & = \mqty(  \pm\sin\theta_1e^{i \chi_{11}} &
                    0 \\
                    0 &
                    \pm\sin\theta_2e^{i \chi_{22}}).
\end{align*}

Such channel, however, does not change the maximally mixed state $\Phi(\rho_*) = \rho_*$, and so the entropy of its complementary $S^{map}(\Tilde{\Phi}) = \log 2$ is maximal  
and as such does not come into our interest. Similar reasoning applies
to the case when $\sin\phi_2 = 0$. 

Now, if none of the trigonometric functions in condition
\eqref{n-trivial} is zero, the condition can be rewritten in the form 
\begin{equation}
    \tan \phi_1 \tan \phi_2 = - e^{i(\chi_{11} - \chi_{12} - \chi_{21} + \chi_{22})}.
\end{equation}
Since $\tan \phi_1, \tan \phi_2$ are real-valued, the condition can be split into two separate ones:

\begin{align}
    e^{i(\chi_{11} - \chi_{12} - \chi_{21} + \chi_{22})}  & = \pm 1 \nonumber \\
  \text{and}\qquad
  \phi_1 & = \arctan(\frac{\mp1}{\tan \phi_2}) = \pm\phi_{2\,\text{mod}\pi} \mp \pi/2\,.
\end{align}
Without loss of generality let us consider only the case with upper sign and $\phi_2\in(0,\pi)$, which gives

\begin{align*}
    e^{i(\chi_{11} - \chi_{12} - \chi_{21} + \chi_{22})}  & = 1 \\
    {\rm and }\qquad\phi_1 = \phi_2 - \frac{\pi}{2}\,.
\end{align*}

This gives Kraus operators in the form
\begin{align*}
    K_1 & = \mqty(  \cos\theta_1 & 0\\
                    0 & \cos\theta_2) \\
    K_2 & = \mqty(  \sin\theta_1\sin\phi_2e^{i \chi_{11}} &
                    \sin\theta_2\cos\phi_2e^{i \chi_{21}} \\
                    -\sin\theta_1\cos\phi_2e^{i \chi_{12}} &
                    \sin\theta_2\sin\phi_2e^{i \chi_{22}})\,.
\end{align*}
For this channel the result
$\Phi(\rho_*)$  of action on the maximally mixed state
 has eigenvalues of the form 
\begin{align*}
    \lambda_{1,2} & = \frac{1}{2} \pm \frac{1}{4} \sqrt{\cos ^2(\phi_2) (\cos (2 \theta_1)-\cos (2 \theta_2))^2} \\
    & = \frac{1}{2} \pm \frac{1}{4} \cos(\phi_2) \qty(\cos(2 \theta_1)-\cos(2 \theta_2))
\end{align*}
where, without loss of generality, we drop the absolute value. Now, we may assume, once again, that $\lambda_1 = a \in [0,\frac{1}{2})$, which allows us to solve for $\phi_2$, which gives
\begin{equation}
    \phi_2 = \pm \arccos \qty(\frac{2 (2 a-1)}{\cos (2 \theta_1)-\cos (2 \theta_2)})\,.
\end{equation}
Given this, we consider the linear entropy for the complementary channel, which yields: 
\begin{align*}
    S_{lin} = 1 & - \frac{1}{4} \big(-(1-2 a)^2 \frac{\cos (\Delta_\chi)+1}{\sin^2(\theta_1-\theta_2)}+(1-2 a)^2 \frac{\cos (\Delta_\chi)-1}{\sin^2(\theta_1+\theta_2)} \\
     & +8 (a-1) a+\sin (2 \theta_1) \sin (2 \theta_2) \cos (\Delta_\chi)+\cos (2 \theta_1) \cos (2 \theta_2)+5)
 \end{align*}

First we notice that the quantity is dependent only on the difference
of the phases, $\Delta_\chi = \chi_{11} - \chi_{22}$, which reduces
the effective number of free parameters. Considering the derivative
with respect to it we get  
\begin{align*}
    \pdv{S_{lin}}{\Delta_\chi} & = \frac{1}{4} \sin (\Delta_\chi)\qty((1-2 a)^2 \qty(\frac{1}{\sin ^2(\theta_1-\theta_2)}- \frac{1}{\sin^2(\theta_1+\theta_2)})
     - \sin (2 \theta_1) \sin (2 \theta_2) ) = 0\,.
\end{align*}
From the solution of this equation we find that
\begin{equation}
    \Delta_\chi = n\pi\,.
\end{equation}
First we consider only $\Delta_\chi = 0$. Next, we need to consider
derivative with respect to $\theta_1$, which turns out to be dependent
only on $\Delta_\theta = \theta_1 - \theta_2$,
\begin{equation}
    \pdv{S_{lin}}{\theta_1} = \frac{1}{4} \left(\frac{4 (1-2 a)^2}{ \tan (\theta_1-\theta_2) \sin ^2(\theta_1-\theta_2)}-2 \sin (2 (\theta_1-\theta_2))\right) = 0\,.
\end{equation}
The solutions of this
equation are analogous 
to those of 
\eqref{purity1der}, yielding the same
extremal entropy curve as earlier. 
For $\Delta_\chi = \pi$, the equation is slightly more complicated
\begin{equation}
    \pdv{S_{lin}}{\theta_1} = \frac{32 a^2-32 a+4 \cos (2 (\theta_1+\theta_2))-\cos (4 (\theta_1+\theta_2))+5}{\tan (\theta_1+\theta_2) \sin ^2(\theta_1+\theta_2)}  = 0
\end{equation}
but again yields the 
solutions analogoues to 
as \eqref{purity1der}.

This completes the proof, as the consideration exhausts the set of
possible qubit-qubit channels.  \hfill\qed

\section{Hamiltonian evolution for qutrit boundary} \label{hamiltonianEvo}

In order to solidify the conjecture \ref{qutrit_bound_conj} concerning the boundary of allowed region $\mathcal{A}_3$, 
we performed 
numerical checks using evolution under random hamiltonians $H$ of dimension $9$ drawn from GUE. Let us consider the channel given by Kraus operators $K_1(0), K_2(0), K_3(0)$. We can define its evolution by considering the block column formed by all the Kraus operators and transforming it under unitary generated by hamiltonian

\begin{equation}
	\mqty(K_1(t) \\ K_2(t) \\ K_3(t)) = \exp(i H t) \mqty(K_1(0) \\ K_2(0) \\ K_3(0))
\end{equation}

In the following figures \ref{fig:qtritboundevo1} and \ref{fig:qtritboundevo2} results of numerical checks are presented. 
For ease of computation, 
instead of von Neumann entropy $S$, linear entropy $S_{lin}$ was used in the computations. 
In order to justify this, consider the differential of linear entropy in any dimension, given by 

\begin{align*}
	\dd{S_{lin}} = \dd(1 - \sum_{i=1}^N \lambda_i^2) = -2\sum_{i=1}^{N-1} \lambda_i\dd{\lambda_i} - 2\qty(1-\sum_{j=1}^{N-1}\lambda_j)\dd(1-\sum_{j=1}^{N-1}\lambda_j) = -2\sum_{i=1}^{N-1} \qty(\lambda_i +\sum_{j=1}^{N-1}\lambda_j - 1)\dd{\lambda_i}
\end{align*}
which is zero when $\dd\lambda_i = 0$ for all $i$ or for $\lambda_i = \frac{1}{N}$, which can be calculated by direct solving of the system of equations of the form $\lambda_i +\sum_{j=1}^{N-1}\lambda_j - 1 = 0$.

Similarly, consider the differential of von Neumann entropy, that is given by

\begin{align*}
	\dd{S} & = -\dd(\sum_{i=1}^N \lambda_i \log \lambda_i) = - \sum_{i=1}^{N-1} \log\lambda_i \dd\lambda_i - \log(1 - \sum_{j=1}^{N-1}\lambda_j)\dd(1 - \sum_{j=1}^{N-1}\lambda_j) = \sum_{i=1}^{N-1}\log(\frac{1 - \sum_{j=1}^{N-1}\lambda_j}{\lambda_i})\dd\lambda_i.
\end{align*}
this is zero either when $\dd\lambda_i = 0$ for every $i$ or when the system of equations of the form $\frac{1 - \sum_{j=1}^{N-1}\lambda_j}{\lambda_i} = 1$ is satisfied. However, by elementary manipulation it is seen that they are equivalent to conditions for extrema of linear entropy. Thus, the equivalence is established. 
In terms of visual appearance, the difference between the plots presented in Fig. \ref{fig:qtritboundevo1}, \ref{fig:qtritboundevo2} and the plot in Fig. \ref{fig:qtritfullbound} is 
a non-uniform rescaling and flip of both vertical and horizontal axes.

\begin{figure}[H]
    \centering
    \includegraphics[width=.45\linewidth]{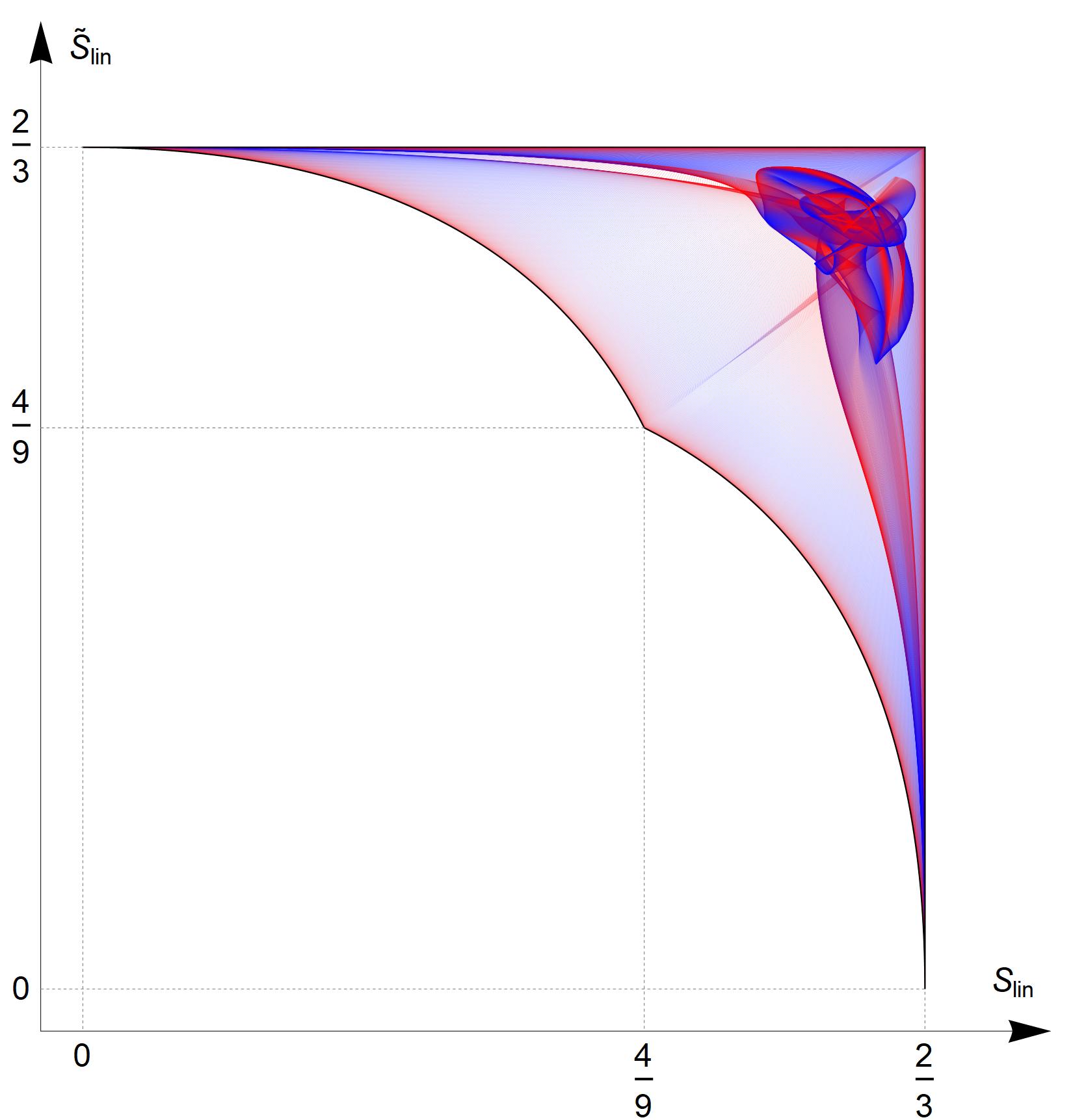}
    \includegraphics[width=.45\linewidth]{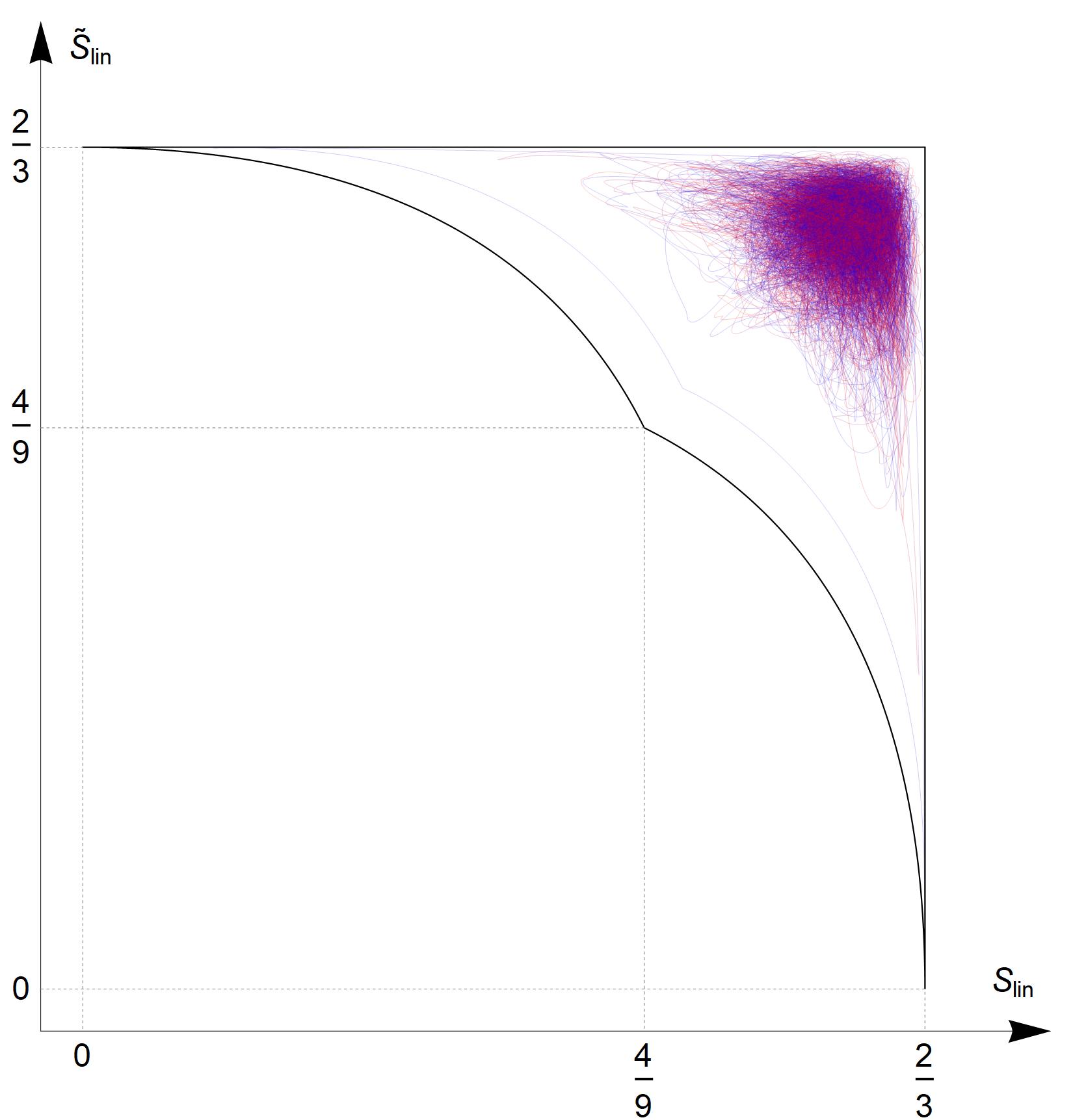}
    \caption{
      Evolution, described in Appendix \ref{hamiltonianEvo}, of linear entropies of channels drawn from the conjectured boundary of allowed region $\mathcal{A}_3$ for qutrit channels under fixed hamiltonian $H$ drawn from GUE. In both panels each 
      point on a time slice 
      is connected by a line with the corresponding point on the next time slice, and color-coding periodic in time
      is employed. In the left panel detailed evolution with $t\in[0,1]$ and timestep 0.001 is shown, while in the right panel we can see evolution for $t\in[0,100]$ with timestep 0.1. In neither 
      of the two
       cases,       channels outside of the conjectured boundary, given in black, have been found.} 
    \label{fig:qtritboundevo1}
\end{figure}

\begin{figure}[H]
    \centering
    \includegraphics[width=\linewidth]{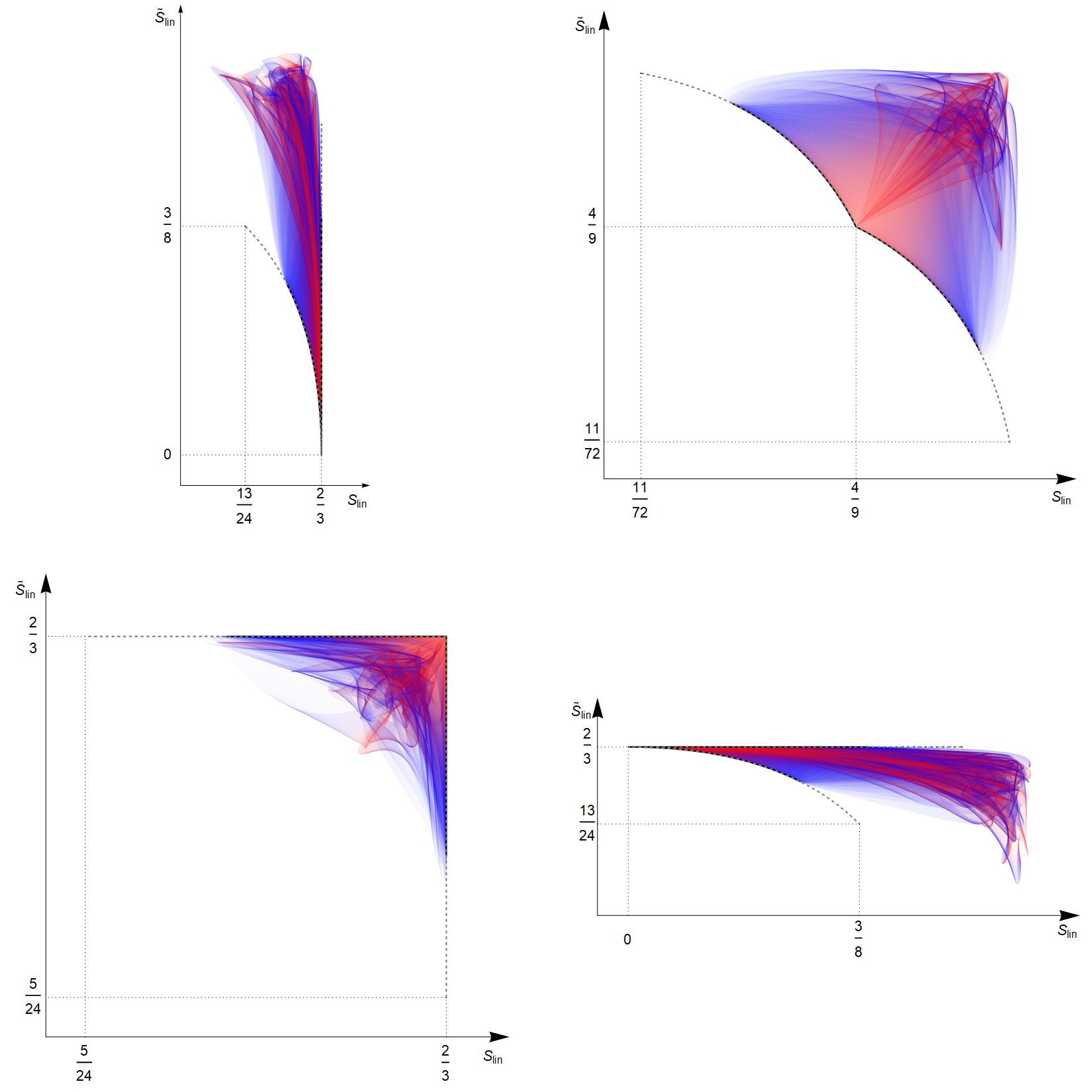}
    \caption{
      Evolution of linear entropies, described in Appendix \ref{hamiltonianEvo}, of channels taken from vicinity of cusp points in the boundary of allowed region $\mathcal{A}_3$.
      Color-coding has been employed here to distinguish evolutions originating from different points on the boundary. Black lines correspond to the region of origin of channels, whereas gray dashed extensions are further fragments of the boundary. Each of the panels includes evolutions with respect to 30 random hamiltonians for $t\in[0,0.5]$ with timestep 0.01. For none of the 30 hamiltonians evolution beyond the boundary has been found.} 
    \label{fig:qtritboundevo2}
\end{figure}

\section{Boundary for $N=4$} \label{qquartbounds}

In order to obtain the full lower boundary of the allowed set $\mathcal{A}_4$ we define three emission channels by their matrices $L$.

\begin{align*}
L_1 & = \left(
\begin{array}{cccc}
 1 & 0 & 0 & 0 \\
 1 & 0 & 0 \\
 1 & 0 \\
 1 \\
\end{array}
\right), &
L_2 & = \left(
\begin{array}{cccc}
 1 & 0 & 0 & 0 \\
 1 & 0 & 0 \\
 1 & 1 \\
 0  \\
\end{array}
\right), &
L_3 & = 
\left(
\begin{array}{cccc}
 1 & 1 & 0 & 0 \\
 0 & 0 & 0 \\
 1 & 1 \\
 0 \\
\end{array}
\right),
\end{align*}
among which $L_1$ is in fact the identity channel $\Phi(\rho) = \rho$.

Using these we define two interpolation channels as described in Appendix \ref{LAdetails}:

\begin{enumerate}
	\item $A(L_1,L_2;x)$ with $x = 1 - 4a$, giving a parametric curve
$$
p(a) = (-a\log a - (1-a) \log(1-a), \log 2-(a+\frac{1}{4})\log(a+\frac{1}{4}) - (a-\frac{1}{4})\log(a-\frac{1}{4}))	
$$ for $a\in(0,\frac{1}{4})$.
	\item $A(L_2,L_3;x)$ with $x = 4a - 2$, giving a parametric curve
$$
p(a) = (\frac{\log 4}{4} -(1-a)\log(1-a) - (a-\frac{1}{2})\log(a-\frac{1}{2}), -a\log a - (1-a) \log(1-a))	
$$ for $a\in(\frac{1}{2},\frac{3}{4})$.
\end{enumerate}

Remaining part of the boundary may be given as reflection through the line $S = \Tilde{S}$, containing the self-complementary channels $\Phi = \tilde \Phi$.

\section{Boundaries for qubit and qutrit systems with maximally extended boundaries}

To further extend the analysis, in the following Fig. \ref{fig:fullenvs} we provide full boundaries for qubit and qutrit systems with environment extended to dimension $N^2$ in order to cover all the available region in the entropy plane. The results given in this paper allow us to form more precise boundaries than in \cite{RPRZ13}. Moreover, in case of qubits the bounds in Prop. \ref{prop_qbit_bound} are proven to be tight.

\begin{figure}[H]
    \centering
    \includegraphics[width=.49\linewidth]{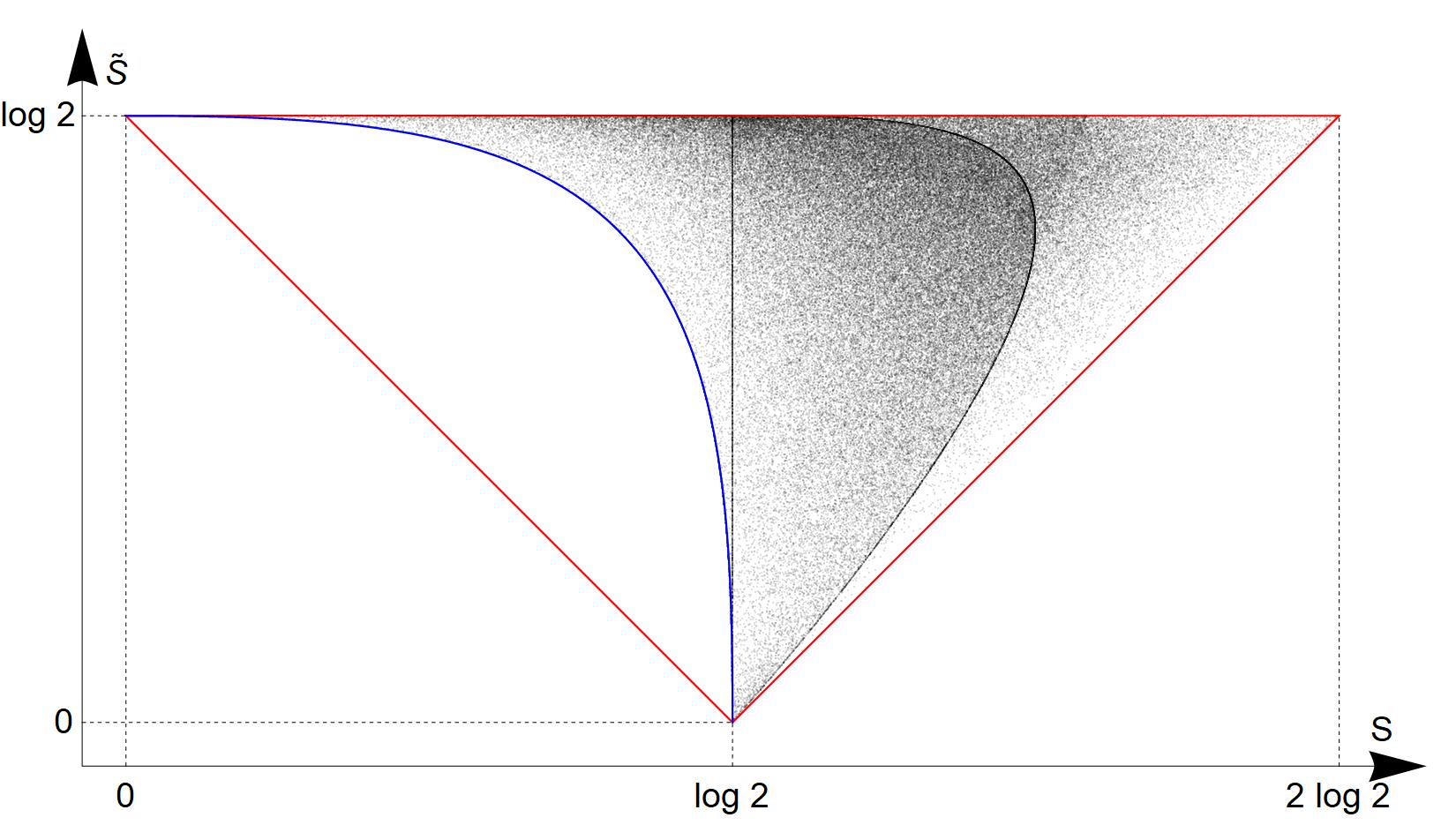}
    \includegraphics[width=.49\linewidth]{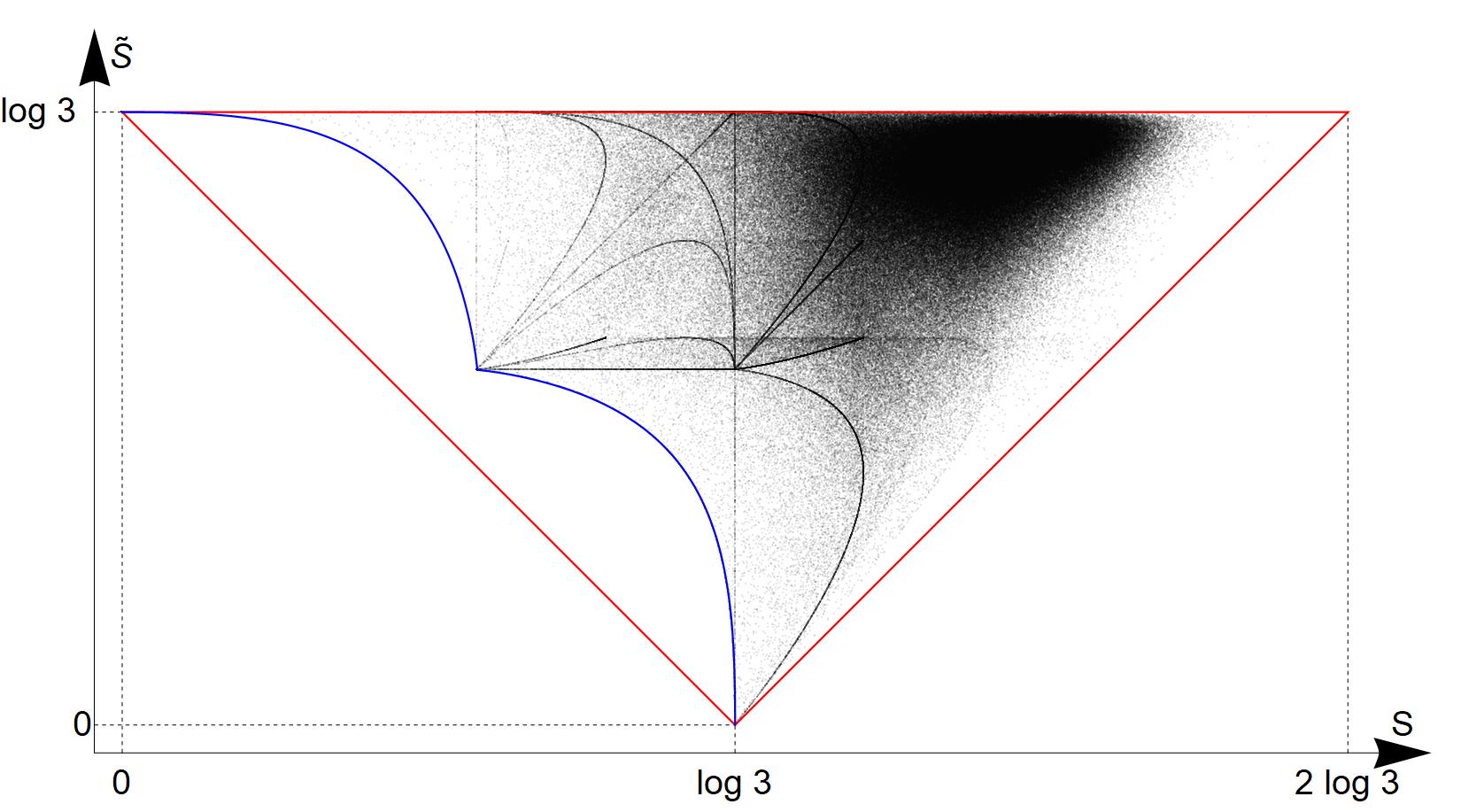}
    \caption{
      Possible entropies of channels and their 
      complementaries for qubits and qutrits in the respective
      panels. Black dots are channels generated according to the
      method given in Appendix \ref{Channel_gen}. Red triangles are
      the boundaries obtained from Obs.~\ref{obs1}, Prop. \ref{prop1}
      and subadditivity for von Neuman entropy. Blue lines are the
      proposed tight lower bounds, proved for qubits in
      Prop.~\ref{prop_qbit_bound} and conjectured for qutrits in
      Conj.~\ref{qutrit_bound_conj}. As channels corresponding to the
      vertices are well known, the empty region on the right side of
      qutrit plot should be regarded as an artifact of the chosen
      method of channel generation.} 
    \label{fig:fullenvs}
\end{figure}

\end{widetext}

\end{document}